\newtheorem{theorem}{Theorem}[section]
\newtheorem{kor}[theorem]{Corollary}
\newtheorem{assumption}{Assumption}[section]
\newtheorem{remark}{Remark}[section]
\title{Practically significant differences between conditional distribution functions}
\author{Holger Dette, Kathrin Möllenhoff, Dominik Wied\footnote{Holger Dette: University of Bochum, Kathrin Möllenhoff and Dominik Wied: University of Cologne}}
\begin{document}
\maketitle

\begin{abstract}
In the framework of semiparametric distribution regression, we consider the problem of comparing the conditional distribution functions corresponding to two samples. In contrast to testing for exact equality, we are interested in the (null) hypothesis that the $L^2$ distance between the conditional distribution functions
does not exceed a certain threshold in absolute value. 
The consideration of these hypotheses is motivated by the observation that in applications, it is rare, and perhaps impossible, that a null hypothesis of exact equality
is satisfied and that the real question of interest is to detect a practically significant deviation between the two conditional distribution functions. 

The consideration of a composite null hypothesis makes the testing problem challenging, and in this paper we develop a pivotal test for such hypotheses. Our approach is based on self-normalization and therefore requires neither the estimation of (complicated) variances nor bootstrap approximations. 
We derive the asymptotic limit distribution of the (appropriately normalized) test statistic and show consistency under local alternatives. A simulation study and an application to German SOEP data reveal the usefulness of the method.\\ 
\textbf{Keywords:} Distribution regression, empirical processes, self-normalization\\
\textbf{JEL:} C12, J31
\end{abstract}

\newpage

\section{Introduction}
 \def\theequation{1.\arabic{equation}}	
   \setcounter{equation}{0}

In recent years, distribution regression has become a popular approach to model the entire conditional distribution function of an outcome (e.g. income of individuals) given covariates (e.g. the years of education or the years of working experience). 
For many real-world questions in empirical research
this technique provides a very flexible and much more informative inference tool than methods with a direct focus on the conditional mean because of its ability to address aspects such as variability, asymmetry, tail behavior, or multimodality, which play a critical role in contexts involving inequality, risk or discrimination. This allows for richer inference, greater interpretability, and more realistic representations of heterogeneity across populations.
For example, \citet{chernozhukov:2013} used this approach for modeling counterfactual income distributions, \citet{delgado:2022} applied it in duration analysis, \citet{wang:2023} on insurance data, \citet{spady:2025} on the gender wage gap, while \citet{chernozhukov:2020}, \cite{sanchez:2020}, \citet{wied:2024} and \citet{chernozhukov:2025} considered aspects of endogeneity in these models. Indeed, there are various approaches for distribution regression in the literature; a recent survey on this and on possible applications is given by \citet{kneib:2023}.

In this paper, we are concerned with the comparison of two conditional distribution functions in the framework of the semiparametric distribution regression   introduced by \citet{foresiperacchi:1995}. We choose this particular model due to the combination of flexibility and interpretability, noting that our methodology can also be used for other models. The problem of comparing conditional distribution functions has recently been addressed by \citet{hulei:2024} using a different (i.e. the conformal prediction) framework. These authors state two important applications for this test problem: On the one hand, the equality of the training and test distribution in machine learning can be tested. On the other hand, the equality of conditional distributions under different experimental conditions in causal inference can be investigated. Earlier papers about this topic include \citet{zhou:2017}, who proposed smooth tests for the equality of distributions. General specification testing in conditional distribution models was considered by \citet{andrews:1997}, \citet{RotheWied2013} and \citet{troster:2021} among others.

A common feature of all work in this area consists in the fact that  it has  its focus on testing the exact equality 
\begin{align} \label{det92}
H_0^{\rm exact}:~
F^1_{Y|X}  \equiv  F^2_{Y|X}   ~.
\end{align}
of the conditional distribution functions  $F^1_{Y|X}$ and $ F^2_{Y|X}$ 
corresponding to the two samples. In this paper we take a different point of view on the problem of comparing conditional distribution functions. Our work is motivated by the fact that in many cases it is very unlikely that the conditional distribution functions exactly coincide over the full domain of interest.
We thus address a concern  which was already mentioned in  
\cite{berger1987}, who  argued that it  {\it ``is rare, and perhaps impossible, to have a null
hypothesis that can be exactly modeled by a parameter being exactly $0$''} (here the difference  of the conditional distribution functions over the domain of interest). A similar point was raised by  \cite{tukey1991} who mentioned in the context of multiple comparisons of means that  
{\it  ``$\ldots $
All we know about the world teaches us that the
 effects of A and B are always different  - in some 
 decimal place - for any A and B.  Thus asking ``Are
 the effects different?'' is foolish.  $\ldots $''
}  
In the context of comparing two  conditional distribution functions this means that we might not be interested in testing the exact equality as formulated in \eqref{det92}  as we do not believe that this hypothesis can be exactly satisfied (at least not on a microscopic scale). Instead it might be more reasonable to investigate if the deviation between  $ F^1_{Y|X}$ and  $F^2_{Y|X} $ is in some sense ``small''  or not. Therefore we propose to test the hypotheses
\begin{align} \label{det91}
H_0:  d(F^1_{Y|X},  F^2_{Y|X}  ) \leq \Delta  ~~~ {\rm versus}  ~~~
  d(F^1_{Y|X},  F^2_{Y|X}  ) > \Delta  ~,
\end{align}
where $d$ is a metric on the set of distribution functions (we will later specify a particular norm), and $\Delta > 0$ is a given threshold that determines when the deviation between the conditional distribution functions is considered  
as {\it practically significant} {\it or relevant}.  Note that we obtain  the hypotheses in \eqref{det92} from \eqref{det91} for $\Delta =0$, but in the present paper we are not interested in arbitrary small deviations between the conditional distribution functions and therefore we restrict ourselves to  the case $\Delta >0$ throughout this paper.
Thus by testing the hypotheses \eqref{det91} we are trying to answer the question: are the differences between  the two conditional distribution functions   economically significant? In causal inference, one would answer the question if there is a {\it relevant treatment effect}.
\\
 Hypotheses of this form are often called {\it relevant}  hypotheses. The interchanged hypotheses of this type, that is 
$ H_0: d > \Delta  ~\text{ versus }~   H_1:  d \leq \Delta$,  have found considerable attention 
in the biostatistics literature 
for comparing finite dimensional parameters  \citep[see, for example  the monograph by][]{wellek2010testing}, but - despite their importance -   relevant hypotheses  have not been studied intensively in the econometric literature. A few exceptions are the works of see \citet{detteschumann:2024} and 
\cite{kuttadette} who considered these hypotheses in the context of testing the equivalence of pre-trends in
difference-in-differences estimation and 
for validating the homogeneity of the slopes in  high-dimensional panel models, respectively. 

An essential ingredient in implementing any  test for the hypotheses \eqref{det91} is  the specification of the threshold $\Delta$. This choice is case specific,  depends sensitively on the particular problem under consideration and has to be carefully discussed for each application. 
We argue that such a specification is possible in many cases. For example, in the empirical application below, we consider the case of changes in the German income distribution between $2013$ and $2020$.  A  relevant change in the conditional income  distribution  here means that the incomes increased stronger compared to inflation,
which leads to a natural  choice of $\Delta$.  We consider the hypothetical case, in which the model structure remains the same and the incomes increased to exactly the same extent as the consumer prices. If we use a log-linear model, an increase of the incomes of, say, 7\%, implies a shift of the intercept of $0.07$. Then, we calculate the distance between these two conditional distribution functions for some fixed values of the regressors, which yields the value $\Delta$. In order to address model uncertainty at this stage, it would be possible to calculate $\Delta$ for several models and check if the null hypothesis is rejected for the minimum of these values of $\Delta$. 

Moreover,  for applications, where  a specification of $\Delta$ is difficult, we will develop a  method for  determining a threshold from the data, which can serve as measure of evidence for
similarity between the two conditional distribution functions with a controlled type I error $\alpha$ (see Remark \ref{remchoceofdelta} for more details).

In the present paper we will develop a pivotal test for the hypotheses \eqref{det91} where the metric  $d$  is given by the $L^2$-norm. We  consider  the semiparametric distribution regression model,    which has been  proposed  by \citet{foresiperacchi:1995} for
 the conditional distribution functions and   
is introduced in Section \ref{sec2}.  The basic idea consists of using a suitable estimator of the $L^2$-norm of the difference between the conditional distribution functions and reject for large values of this estimate. However,  deriving  critical values either by asymptotic theory or by bootstrap under the composite null hypothesis  
in \eqref{det91} is nontrivial.  In particular the asymptotic distribution depends on nuisance parameters, which are difficult to estimate in practice.  To develop a pivotal test for the hypotheses in \eqref{det91} we propose in Section \ref{sec3}  a self-normalization approach which is based on the weak convergence of a stochastic process estimating the difference $F^{1} _{Y|X}(\cdot |x) - F^{2} _{Y|X}(\cdot |x)   $ of the conditional distribution functions and is used to normalize the difference between the (squared) $L^2$ norm of the estimated difference and the (squared) $L^2$-norm of the true difference of the conditional distribution functions.  As a  
consequence  the limiting distribution is free of nuisance parameters and no bootstrap approximation or variance is required to calculate  quantiles for a decision rule. We  prove that the resulting test  is consistent, has asymptotic level $\alpha$ and can detect local alternatives at a parametric rate. 
Moreover, for applications, where  a specification of the threshold $\Delta$ is difficult, we develop a method for determining a threshold in the hypothesis \eqref{hypotheses} from the data, which can serve as measure of evidence for
similarity between the two conditional distribution functions with a controlled type I error $\alpha$ (see Remark \ref{remchoceofdelta} for more details).
In the same section we also propose an asymptotically pivotal confidence interval for the distance between the two conditional distribution functions and 
discuss extensions to testing for practically significant  (or relevant) endogeneity. In Section \ref{sec4}, we study the finite sample properties of the new test by means of a  simulation study, and  in  Section \ref{sec5}  we illustrate the potential of our approach in an empirical application. Remarks on possible extensions to detect relevant endogeneity and some discussions conclude the paper.

\section{Semiparametric distribution regression}
\label{sec2}
 \def\theequation{2.\arabic{equation}}	
   \setcounter{equation}{0}

We  use  semiparametric distribution regression as  introduced by \citet{foresiperacchi:1995}
to model the dependence between a real valued outcome variable  $Y$ and a $p$-dimensional 
predictor $X$.  Here, the conditional distribution function of $Y$ given the set of regressors $X$ is modeled by 
\begin{align}
    \label{det112}
F_{Y|X}(y|x) = \Lambda(x^\top \beta(y))
\end{align}
for some (known) link function $\Lambda$  (such as the normal distribution) and some function $\beta(y)$.  Although the function $\Lambda$ must be chosen in advance, this model is called semiparametric in the literature. Moreover, in  the finite sample study in Section \ref{sec4}, we demonstrate that our approach is not very sensitive with respect to 
a misspecification of the  link function.

A nice feature of model \eqref{det112} is that  it allows for potentially different parameters for each argument $y$ of the conditional distribution function and that the same time it is easy to interpret because of the linear model structure, which is matched with some link function. Inverting the conditional distribution function leads to models for the conditional quantiles. A potential advantage of this approach compared to standard quantile regression lies in the ability to automatically incorporate jump points. This might be interesting for data sets, where roundings or phenomena such as thresholds for limited-earning jobs are present. Usually, in the literature, no explicit restrictions on $\beta(y)$ such as continuity are imposed. However, it is clear that $\beta(y)$ has to be chosen such that $F_{Y|X}(y|x)$ fulfills the properties of a conditional distribution function if the model is correctly specified.
 Given $x$, this function must be right-continuous,  monotonically increasing and converge to $1$ and $0$ for $y \rightarrow \infty$ and $y \rightarrow -\infty$, respectively. This way, conditional quantiles can be obtained without facing the quantile crossing problem. As pointed out in \cite{wied:2024},
$Y$ may be discretely distributed even if the link function $\Lambda$ is continuous and there is no one-to-one connection between the distribution of $Y$ and the link function.

For an  i.i.d. sample of observations, the conditional distribution function $F_{Y|X}(y|x)$ (more precisely, the corresponding parameter) can be consistently estimated by a Z-estimator similarly as a probit model would be estimated, for example, where the estimation is performed separately for each $y$. To be precise, one introduces the indicator functions $I_y := 1\{Y \leq y\}$ with $E(I_y|X=x) = F_{Y|X}(y|x) = \Lambda(x^\top \beta(y))$. The model can be interpreted as a latent variable model with $I_y = 1$ if $X^\top \beta(y) + U \geq 0$ and $I_y = 0$ otherwise, where  $U$ is  an exogenous random variable (meaning that $X$ is independent of $U$)  with distribution function $\Lambda$.

\section{Relevant differences between conditional distribution functions}
\label{sec3}
 \def\theequation{3.\arabic{equation}}	
   \setcounter{equation}{0}

We will develop a pivotal   test   for a relevant difference between two  conditional distribution functions as formulated in equation \eqref{det91}  in the common  two-sample problem  (for example, comparing  the distributions of  populations in two different years).
For $\ell=1,2$ let $(X_1^{\ell},Y_1^{\ell}),\ldots,(X_{n_{\ell}}^{\ell},Y_{n_{\ell}}^{\ell})$ denote  two  independent samples of i.i.d observations, where $Y_i{^\ell}\in\mathbb{R}$ and   $X_i{^\ell}\in\mathbb{R}^p$. For the conditional distribution function of $Y^{\ell} $ given $X^{\ell} =x$ we  consider the model  
\begin{align}
    \label{det90}
    F^\ell _{Y|X}(y|x) = \Lambda(x^\top \beta_\ell (y)) , 
\end{align}
 where 
$\beta_\ell (y) $ denotes a $p$-dimensional parameter (depending on $y$) and  $\Lambda$ is  a given distribution function.
 For a given threshold  $\Delta > 0$ we are interested in  the relevant hypotheses
\begin{eqnarray}\label{hypotheses}
&~~~~~~~~~~~~~H_0:& || F^1_{Y|X}(\boldsymbol{\cdot}|x) - F^2_{Y|X}(\boldsymbol{\cdot}|x) ||_2 \leq \Delta  ~~~~~~~~~~~~~~~~~~~~~~~~~~~~~~~~~~~~~~~~~~~~~~~~ \\
\nonumber  
\text{  versus  }  & & \\
&~~~~~~~~~~~~~H_1:& || F^1_{Y|X}(\boldsymbol{\cdot}|x) - F^2_{Y|X}(\boldsymbol{\cdot}|x) ||_2 > \Delta ~, ~~~~~~~~~~~~~~~~~~~~~~~~~~~~~~~~~~~~~~~~~~~~~~~~
\label{alternative}
\end{eqnarray}
where $x$ is a fixed covariate, 
\begin{equation*}
||f||_2^2=\int_I f^2(y) dy
\end{equation*}
denotes the (squared) $L^2$-norm of the function $f$ 
and $I\subset \mathbb{R}$ is a given interval of interest for the response $Y$.
The maximum likelihood estimator described in the previous section is denoted by $\hat \beta_\ell (y) $ for each sample $(\ell=1,2)$.
For our approach it is also necessary to consider 
the  corresponding estimators for the samples 
 $(X_1^{\ell},Y_1^{\ell}),\ldots,(X_{\lfloor n_\ell t \rfloor}^{\ell},Y_{\lfloor n_\ell t \rfloor}^{\ell})$ of the first $\lfloor n_\ell t \rfloor$ observations in each sample, where $t\in [0,1] $ is a fixed constant. We denote these  estimators  by $\hat \beta ^{\ell}(t,y)$ and note that  $\hat \beta ^{\ell}(y)= \hat \beta ^{\ell}(1,y)$ ($\ell =1,2$).  Finally, we define
 $$
 \hat F^\ell _{Y|X}(t,y|x) := \Lambda(x^\top \hat \beta^\ell(t,y))
$$
as the sequential estimator of the conditional distribution function $F^\ell_{Y|X}(y|x)$ from the sample  
$(X_1^{\ell},Y_1^{\ell}),\ldots,(X_{\lfloor n_\ell t \rfloor}^{\ell},Y_{\lfloor n_\ell t \rfloor}^{\ell})$ ($\ell =1,2$).
For constructing the test, we then define
the statistics 
\begin{equation}
   \hat T_n= \int_I \hat\Delta^2(1,y|x) dy
   \label{det97}
\end{equation}
and
\begin{equation}
\label{det96}
   \hat V_n^2= \int_\epsilon^1 \Big\{\int_I\hat\Delta^2(t,y|x) dy - \int_I \hat\Delta^2(1,y|x) dy\Big\}^2 dt, 
\end{equation}
where $\epsilon >0 $ is a constant used to  achieve numerical
stability and 
\begin{equation}
\label{det98}
\hat\Delta(t,y|x) := \hat F^1_{Y|X}(t,y|x) - \hat F^2_{Y|X}(t,y|x)
\end{equation}
denotes the difference between the sequential estimates of the conditional distribution functions. Note that (under standard assumptions) 
$\hat\Delta(t,y|x)$ estimates
\begin{align}
\label{det99}
\Delta(y|x) :=  F^1_{Y|X}(y|x) -  F^2_{Y|X}(y|x)
\end{align}
consistently, and consequently the statistic $\hat T_n$ defined in \eqref{det97} is a consistent estimate of 
$$
\int_I\Delta^2 (y|x) dy = \|  F^1_{Y|X}(\cdot |x) -  F^2_{Y|X}(\cdot |x) \|_2^2.
$$
Therefore, we propose to reject the null hypothesis in \eqref{hypotheses}, whenever
\begin{equation}\label{dec_rule}
   \hat T_n > \Delta^2 + q_{1-\alpha} \hat V_n,
\end{equation}
where $q_{1-\alpha}$ is the $(1-\alpha)$-quantile of the distribution of the random variable 
\begin{equation}\label{eq:W}
\mathbb{W}=\frac{\mathbb{B}(1)}{\big[\int_\epsilon^1  \big(\mathbb{B}(t)/t-\mathbb{B}(1)\big)^2dt\big]^{{1}/{2}}}
\end{equation}
and $\mathbb{B}$ is a standard Brownian motion. Note that the quantiles of this distribution  can easily be obtained by simulation. For example, for $\epsilon = 0.1$, the $0.95$-quantile  of the distribution of $\mathbb{W}$ is given by $1.7546$.  We also emphasize that the rejection probabilities of the test  
\eqref{dec_rule} are not very sensitive with respect to  choice of $\epsilon$, as this parameter appears in the quantile and the statistic $\hat V_n$; see Remark \ref{remeps} for more details and Table \ref{tab:level1}  in Section \ref{sec4} for some empirical evidence.

Under the following assumptions, which are similar to the assumptions on the  conditional distribution regression model considered  in \citet{chernozhukov:2013}, we can show that the test \eqref{dec_rule} is consistent and has asymptotic level $\alpha$.
\begin{assumption}\label{ass:1}
{\rm For $\ell=1,2$, the following holds:
\begin{enumerate}
\item The two populations are independent from each other.
\item The conditional distribution function takes the form $F^\ell_{Y|X}(\boldsymbol{\cdot}|x) = \Lambda(x^\top \beta_\ell(y))$, where $\Lambda$ is such that, for fixed $y$, $\beta_\ell(y)$ is consistently estimated by the Z-estimator mentioned in Section 2.
\item $I \subset \mathbb{R}$ is either a compact interval or a finite set. In the former case, the conditional density function $f^\ell_{Y|X}(\boldsymbol{\cdot}|x)$ exists, is uniformly bounded and uniformly continuous.
\item $E||X^\ell||^2 < \infty$ and the minimum eigenvalue of the  $p\times p$-matrix 
\begin{equation*}
E\left(X^\ell X^{\ell^\top} \frac{\lambda(X^\ell \beta_\ell(y))^2}{\Lambda(X^\ell \beta_\ell(y)) [1-\Lambda(X^\ell \beta_\ell(y))]} \right)
\end{equation*}
is bounded away from zero uniformly with respect to  $y \in I $, where $\lambda$ is the derivative of $\Lambda$.
\end{enumerate}
}
\end{assumption}

\begin{theorem}\label{thm4}
    If Assumption \ref{ass:1} holds, and $n_1,n_2\rightarrow\infty$, such that $\tfrac{n_1}{n_1+n_2}\rightarrow c\in(0,1)$, the decision rule \eqref{dec_rule} defines a consistent asymptotic level $\alpha$ test. In particular, 
    \begin{eqnarray*}
  \lim_{ \substack{n_1,n_2\rightarrow\infty \\ {n_1}/({n_1+n_2})\rightarrow c\in(0,1)}}  \mathbb{P}\big(\hat T_n > \Delta^2 + q_{1-\alpha} \hat V_n\big) =
        \begin{cases}
      0, & \text{if}\ \int_I \hat\Delta^2(y|x) dy < \Delta^2 \\
      \alpha, & \text{if}\ \int_I \hat\Delta^2(y|x) dy = \Delta^2  \text{ ~and } \tau^2>0 \\
      1, &  \text{if}\ \int_I \hat\Delta^2(y|x) dy > \Delta^2 ~.
    \end{cases}
    \end{eqnarray*}
\end{theorem}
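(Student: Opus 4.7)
The plan is to reduce everything to a functional central limit theorem for the sequential process
\[
Z_n(t) := \sqrt{n_1+n_2}\Big(\int_I \hat\Delta^2(t,y|x)\, dy - \int_I \Delta^2(y|x)\, dy\Big), \quad t \in [\epsilon,1],
\]
identify its limit as $\tau\,\mathbb{B}(t)/t$ for a standard Brownian motion $\mathbb{B}$ and a scalar $\tau \ge 0$, and then pass to the pivotal distribution via the continuous mapping theorem. Writing $\delta^2 := \int_I \Delta^2(y|x)\,dy$, the decision-rule statistics satisfy
\[
\hat T_n - \Delta^2 = (\delta^2 - \Delta^2) + (n_1+n_2)^{-1/2} Z_n(1),\qquad (n_1+n_2)\hat V_n^2 = \int_\epsilon^1 (Z_n(t)-Z_n(1))^2\, dt,
\]
so all three cases fall out from the behaviour of $Z_n$.

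First I would establish uniform asymptotic linearity of the sequential Z-estimator: under Assumption~\ref{ass:1}, a Taylor expansion of the estimating equations combined with the information-matrix bound in Assumption~\ref{ass:1}(4) gives, uniformly in $(t,y)\in[\epsilon,1]\times I$,
\[
\sqrt{n_\ell}\big(\hat\beta^\ell(t,y) - \beta_\ell(y)\big) = \frac{1}{t}\, H_\ell(y)^{-1}\, \frac{1}{\sqrt{n_\ell}}\sum_{i=1}^{\lfloor n_\ell t\rfloor}\psi_\ell(X_i^\ell,Y_i^\ell;y) + o_P(1),
\]
where $\psi_\ell$ is the score of the Z-problem and $H_\ell(y)$ the information matrix. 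A Donsker-type argument for partial sums indexed by $(t,y)$ (combining the functional CLT of \citet{chernozhukov:2013} in $y$ with a standard partial-sum CLT in $t$) shows weak convergence of the right-hand side in $\ell^\infty([\epsilon,1]\times I)$ to $t^{-1}H_\ell(y)^{-1}\mathbb{W}^\ell(t,y)$, where $\mathbb{W}^\ell(\cdot,y)$ is a Brownian motion in $t$ for each fixed $y$, and the two limits ($\ell=1,2$) are independent by Assumption~\ref{ass:1}(1).

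Next I would apply the functional delta method twice: once through $\Lambda$ (chain rule with derivative $\lambda$) to pass from $\hat\beta^\ell$ to $\hat F^\ell_{Y|X}(t,y|x)$, and once through the expansion $\hat\Delta^2 - \Delta^2 = 2\Delta(\hat\Delta - \Delta) + (\hat\Delta - \Delta)^2$ whose quadratic term is $O_P((n_1+n_2)^{-1/2})$ uniformly. Integrating in $y$ yields $Z_n(t) \Rightarrow t^{-1}G(t)$ in $D[\epsilon,1]$, where $G(t)$ is a linear functional in $y$ of $\mathbb{W}^1(t,\cdot)$ and $\mathbb{W}^2(t,\cdot)$. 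Because each $\mathbb{W}^\ell$ has independent stationary increments in $t$, so does the scalar process $G$; hence $G \stackrel{d}{=} \tau\,\mathbb{B}$ for some $\tau \ge 0$ whose value is an explicit (but, crucially, unused) functional of $\Delta$, $\lambda$, $x$ and the two information matrices. Thus $Z_n(\cdot) \Rightarrow \tau\,\mathbb{B}(\cdot)/(\cdot)$.

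For the conclusion, the map $z \mapsto z(1)\big/\big[\int_\epsilon^1 (z(t)-z(1))^2\, dt\big]^{1/2}$ is continuous on the subset of $D[\epsilon,1]$ on which the denominator is nonzero. On the boundary $\delta^2 = \Delta^2$ with $\tau^2 > 0$, continuous mapping combined with Slutsky's lemma gives
\[
\frac{\hat T_n - \Delta^2}{\hat V_n} = \frac{Z_n(1)}{\big[\int_\epsilon^1 (Z_n(t)-Z_n(1))^2\, dt\big]^{1/2}} \;\Rightarrow\; \frac{\tau\,\mathbb{B}(1)}{\tau\big[\int_\epsilon^1 (\mathbb{B}(t)/t - \mathbb{B}(1))^2\, dt\big]^{1/2}} = \mathbb{W},
\]
so the rejection probability tends to $\alpha$. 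In the strict-null case $\delta^2 < \Delta^2$, the numerator converges in probability to the negative constant $\delta^2-\Delta^2$ while $\hat V_n = O_P((n_1+n_2)^{-1/2})$, so the event $\hat T_n > \Delta^2 + q_{1-\alpha}\hat V_n$ vanishes; in the alternative $\delta^2 > \Delta^2$, the numerator tends to a positive constant and the same $o_P(1)$ bound on $\hat V_n$ yields rejection probability $\to 1$. The main obstacle is the uniform-in-$(t,y)$ linearization of the sequential Z-estimator, which requires a stochastic-equicontinuity argument on $[\epsilon,1]\times I$; once that is secured, the delta method and continuous mapping theorem complete the proof essentially mechanically.
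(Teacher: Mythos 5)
Your proposal follows essentially the same route as the paper's proof: a uniform Bahadur expansion for the sequential Z-estimator, a sequential functional CLT in $(t,y)$, the functional delta method through $\Lambda$, linearization of $\hat\Delta^2-\Delta^2$ with the quadratic term discarded, identification of the limit of the (scaled, centred) sequential $L^2$-statistic as $\tau\,\mathbb{B}(t)/t$, and then the continuous-mapping/Slutsky argument with the three-case analysis of $\big(\Delta^2-\int_I\Delta^2(y|x)\,dy\big)/\hat V_n$. The only imprecision is your claim that the quadratic term $(\hat\Delta-\Delta)^2$ is $O_P\big((n_1+n_2)^{-1/2}\big)$: that order would not be negligible after the $\sqrt{n_1+n_2}$ scaling in $Z_n$, and what you need (and what holds, as the paper records in its analogue of this step) is that it is $O_P\big((n_1+n_2)^{-1}\big)=o_P\big((n_1+n_2)^{-1/2}\big)$ uniformly.
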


\begin{proof}
The statement  essentially follows from the weak convergence of the stochastic process
\begin{align} \left\{\sqrt{n}\big(\hat\Delta(t,y|x)- \Delta(y|x)\big)\right\}_{(t,y)\in [\epsilon,1]\times I} \leadsto \left\{\mathbb{H}(t,y)\right\}_{(t,y)\in [\epsilon,1]\times I} ~
    \label{det100}
\end{align}
in the space $\ell^\infty ([\epsilon,1]\times I ) $ of all bounded functions on the set $[\epsilon,1]\times I, $
where $\hat\Delta(t,y|x)$ and $\Delta(y|x)$ are defined in \eqref{det98}  and \eqref{det99}, respectively, and 
$\left\{\mathbb{H}(t,y)\right\}_{(t,y)\in [\epsilon,1]\times I}$ is a two-dimensional centered Gaussian process with 
covariance structure 
$$
{\rm Cov} \big(\mathbb{H}(t_1,y_1),\mathbb{H}(t_2,y_2)\big)= {t_1 \land t_2  \over t_1t_2} \cdot H(y_1,y_2)
$$
for some function $H$ (here we do not reflect the dependence of $\mathbb{H}$ and $H$ on $x$ in the notation).  
This result corresponds to Theorem \ref{thm3}  in the Appendix and is established through several carefully detailed steps provided therein
(for the definition of the function $H$, see equation \eqref{eq:H}).

  Granted with the weak convergence in \eqref{det100} we then proceed as follows. 
We first rewrite the statistic $\hat T_n$  in \eqref{det97} as 
\begin{align}
   \nonumber  
 \hat T_n -  \int_I \Delta^2(y|x) dy &= \int_I \big(\hat\Delta(1,y|x) - \Delta(y|x)\big)^2 dy 
 + 2 \int_I \Delta(y|x)\big(\hat\Delta(1,y|x) - \Delta(y|x)\big) dy \\ 
 &= \mathbb{D}_n(1) +  \int_I \big(\hat\Delta(1,y|x) - \Delta(y|x)\big)^2 dy \nonumber \\
 &= \mathbb{D}_n(1) +  o_{\mathbb{P}} \Big  ( {1 \over \sqrt{n}} \Big ) ~, 
 \label{eq:rep2}
\end{align}
where the process $ \{  \mathbb{D}_n(t)  \}_{t\in [\epsilon,1]}$ is defined by 
\begin{equation*}
    \mathbb{D}_n(t)  = 
    2 \int_I \Delta(y|x)\big(\hat\Delta(t,y|x) - \Delta(y|x)\big) dy 
    .
\end{equation*}
By \eqref{det100} and the continuous mapping theorem,
this  stochastic process  converges weakly in $\ell^\infty [\epsilon,1]$, that is 
\begin{equation}\label{eq:conv_d}
    \big\{\sqrt{n}\ \mathbb{D}_n(t)\big\}_{t\in [\epsilon,1]} \leadsto \left\{ 2 \int_I \Delta(y|x) \mathbb{H}(t,y) dy\right\}_{t\in [\epsilon,1]},
\end{equation}
where  $\mathbb{H}$ is the centered  Gaussian process  in \eqref{det100}. Note that  the estimate \eqref{eq:rep2} is a direct consequence  of  the weak convergence \eqref{eq:conv_d} in $\ell^\infty [\epsilon,1]$.
Obviously, the process on the right hand side of \eqref{eq:conv_d} is a Gaussian process as well, and its covariance structure is given by
\begin{equation}\label{eq:cov}
    {t_1 \land t_2 \over t_1t_2} \tau^2,
\end{equation}
where
\begin{equation}
\label{det115}
    \tau^2 = 4 \int_I\int_I \Delta(y_1|x)\Delta(y_2|x)H(y_1,y_2)dy_1 dy_2
\end{equation}
and $H$ is defined in \eqref{eq:H}.  Consequently it follows from \eqref{eq:conv_d} and \eqref{eq:cov} that
\begin{equation}
\label{eq:conv_dhat}
   \big\{\hat{\mathbb{D}}_n(t)\big\}_{t\in [\epsilon,1]}  \leadsto \left\{ {\tau  \over t} \mathbb{B}(t)\right\}_{t\in [\epsilon,1]}
\end{equation}
 in $\ell^\infty [\epsilon,1]$, 
where $\mathbb{B}$ is a standard Brownian motion.

The statistic $\hat V_n^2$ in \eqref{det96} is used as a self-normalizing term,
and we obtain by similar arguments
the representation
\begin{eqnarray}\label{eq:rep1}
          \hat V_n^2&=& \int_\epsilon^1 \Big\{\int_I\big(\hat\Delta^2(t,y|x) - \Delta^2(y|x)\big)dy -  \int_I \big(\hat\Delta^2(1,y|x)- \Delta^2(y|x)\big)dy\Big\}^2 dt \nonumber\\
          &=& \int_\epsilon^1 \Big\{\int_I\big(\hat\Delta(t,y|x) - \Delta(y|x)\big)^2 dy 
          +2\int_I  \Delta(y|x)\big(\hat\Delta(t,y|x) - \Delta(y|x)\big) dy \nonumber\\
          &-&   \int_I \big(\hat\Delta(1,y|x) -  \Delta(y|x)\big)^2 dy
          -2\int_I \Delta(y|x)\big(\hat\Delta(1,y|x) - \Delta(y|x)\big) dy
          \Big\}^2 dt \nonumber \\
          &=& \int_\epsilon^1 \big \{ \mathbb{D}_n(t) - \mathbb{D}_n(1) \} ^2 dt 
          +  o_{\mathbb{P}} \Big  ( {1 \over \sqrt{n}} \Big ) ~.
\end{eqnarray}
 Combining \eqref{eq:conv_dhat} with the representations \eqref{eq:rep1} and \eqref{eq:rep2} it follows from the continuous mapping theorem that 
\begin{equation}\label{eq:conv_Tn}
   \sqrt{n}\bigg( \hat T_n -  \int_I \Delta^2(y|x) dy, \hat V_n\bigg) \overset{\mathcal D}{\longrightarrow} \tau\left(\mathbb{B}(1),
   \Big[\int_\epsilon^1 \big(\mathbb{B}(t)/t-\mathbb{B}(1) \big)^2 dt \Big]^{\tfrac{1}{2}} \right ).
\end{equation}
Consequently, a further application of the continuous mapping theorem gives (provided that $\tau>0$) 
\begin{equation}\label{eq:conv_Tn2}
   \frac{\hat T_n - \int_I \Delta^2(y|x) dy}{\hat V_n} \overset{\mathcal D}{\longrightarrow}  \mathbb{W},
\end{equation}
where the random variable $\mathbb{W}$ is defined in \eqref{eq:W}.

We will now use this result to prove the statement regarding the rejection probabilities in Theorem \ref{thm4} noting that
a simple calculation gives 
 \begin{eqnarray}
        \mathbb{P}\big(\hat T_n > \Delta^2 + q_{1-\alpha} \hat V_n\big) =  \mathbb{P}\Big (\frac{\hat T_n - \int_I \Delta^2(y|x) dy}{\hat V_n} > q_{1-\alpha} + \frac{\Delta^2 - \int_I \Delta^2(y|x) dy}{\hat V_n}\Big ).~~~~
        \label{det95}
    \end{eqnarray}
    It follows from \eqref{eq:conv_Tn} that $\hat V_n \overset{\mathbb P}{\longrightarrow} 0$, which implies that
    \begin{eqnarray*}
        \frac{\Delta^2 - \int_I \Delta^2(y|x) dy}{\hat V_n} \overset{\mathbb P}{\longrightarrow}
        \begin{cases}
      \infty & \text{if}\ \int_I \hat\Delta^2(y|x) dy < \Delta^2 \\
      0 & \text{if}\ \int_I \hat\Delta^2(y|x) dy = \Delta^2  \\
      -\infty &  \text{if}\ \int_I \hat\Delta^2(y|x) dy > \Delta^2 
    \end{cases}~~.
    \end{eqnarray*}
    Therefore, the assertion of Theorem \ref{thm4} is a consequence of  the weak convergence \eqref{eq:conv_Tn2} and the representation \eqref{det95}.
\end{proof}

The test \eqref{dec_rule}  can detect local alternatives converging to the null hypothesis at a rate $1/\sqrt{n}$. Compared to testing the
exact equality  $F^1_{Y|X}(\boldsymbol{\cdot}|x) = F^2_{Y|X}(\boldsymbol{\cdot}|x)$  of the conditional distribution functions, where local alternatives can be defined in the form $F^1_{Y|X}(\boldsymbol{\cdot}|x) -  F^2_{Y|X} (\boldsymbol{\cdot}|x)= {1 \over \sqrt{n} } h(y|x)$ (for some function $h$), there are several possibilities to define local alternatives  for the hypotheses \eqref{hypotheses} with $\Delta >0 $ such that 
\begin{eqnarray}\label{localhypothese}
 || F^1_{Y|X}(\boldsymbol{\cdot}|x) - F^2_{Y|X}(\boldsymbol{\cdot}|x) ||_2 =  \Delta + \frac{\delta} {\sqrt{n}} + o \big ( \frac{1} {\sqrt{n}} \big )
\end{eqnarray}
for some   constant $\delta  >0$. To be specific we  consider the testing problem  \eqref{hypotheses} and assume a local alternative of the form 
\begin{equation}
\label{localhypothese1}
H_{1,n}:   F^1_{Y|X}(\boldsymbol{\cdot}|x) - F^2_{Y|X}(\boldsymbol{\cdot}|x)  =  g(y|x)   + {\delta \over \sqrt{n}}   h(y|x) 
\end{equation}
where $\delta  >0$ is some   constant  and $ g(\cdot |x)$ and $h(\cdot |x)$ are functions satisfying
$$
\| g(\cdot |x) \|^2 = \Delta^2~,~~\int_I g(y |x) h(y |x) dy = 1
$$ 
(note that a possible choice is $h=g /\Delta^2$, where $g$ is an arbitrary function with $L^2$-norm $\Delta$).
By a straightforward calculation it follows that  in this case \eqref{localhypothese} holds and  the following result shows that the test \eqref{dec_rule} has non-trivial power under these local alternatives. The proof follows by a careful inspection of the proof of Theorem \ref{thm4}  and the arguments given in the online appendix. 

\begin{theorem}\label{thm4loc}
Let  Assumption \ref{ass:1} be satisfied  and assume that  $\tau^2>0 $ and $n_1,n_2\rightarrow\infty$, such that $\tfrac{n_1}{n_1+n_2}\rightarrow c\in(0,1)$.  Under the local alternatives of the form \eqref{localhypothese} the decision rule \eqref{dec_rule} satisfies 
    \begin{eqnarray}
    \nonumber 
  \lim_{ \substack{n_1,n_2\rightarrow\infty \\ {n_1}/({n_1+n_2})\rightarrow c\in(0,1)}}  \mathbb{P}\big(\hat T_n > \Delta^2 + q_{1-\alpha} \hat V_n\big) &=&
      \mathbb{P} \Big ( \mathbb{W} > q_{1-\alpha } 
- {\delta \over \big[\int_\epsilon^1 \big(\mathbb{B}(t)/t-\mathbb{B}(1) \big)^2 dt \big]^{{1}/{2}}}   \Big )   \\
      &>& \mathbb{P} \big ( \mathbb{W} > q_{1-\alpha }  \big ) = \alpha. \label{det93}
    \end{eqnarray}
\end{theorem}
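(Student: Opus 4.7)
My plan is to adapt the self-normalization proof of Theorem~\ref{thm4} verbatim, with the fixed difference $\Delta(y|x)$ replaced by the drifting difference $\Delta_n(y|x):=g(y|x)+(\delta/\sqrt n)h(y|x)$. The key elementary observation that drives everything is the expansion
$$\int_I \Delta_n^2(y|x)\,dy \;=\; \Delta^2 \;+\; \frac{2\delta}{\sqrt n}\int_I g(y|x)h(y|x)\,dy \;+\; O\bigl(1/n\bigr) \;=\; \Delta^2 \;+\; \frac{2\delta}{\sqrt n} \;+\; O\bigl(1/n\bigr),$$
by the normalizations $\|g(\cdot|x)\|^2=\Delta^2$ and $\int g(y|x)h(y|x)\,dy=1$. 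Hence the hypothesized boundary value $\Delta^2$ and the moving true parameter $\int_I\Delta_n^2(y|x)\,dy$ disagree precisely at the $n^{-1/2}$ scale, which is exactly the rate at which $\hat V_n$ itself fluctuates, so the bias survives after normalization by $\hat V_n$ and produces the claimed non-centrality.

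The first real task is to verify that the functional weak convergence \eqref{det100} continues to hold under the sequence of local alternatives, now with $\Delta_n(y|x)$ as the centering. This can be done either via a Le~Cam contiguity argument—the triangular-array laws generated by $F^{\ell,n}_{Y|X}=\Lambda(x^\top\beta_\ell^n(y))$ remain mutually contiguous with the limiting law at $g$—or, more directly, by inspecting the Donsker/stochastic-equicontinuity estimates in the proof of Theorem~\ref{thm3} and checking that they hold uniformly over $n^{-1/2}$-perturbations of the underlying parameters. Granted this, the algebraic expansions \eqref{eq:rep2} and \eqref{eq:rep1} go through line by line with $\Delta$ replaced by $\Delta_n$, yielding
$$\hat T_n - \int_I\Delta_n^2(y|x)\,dy = \mathbb D_n(1) + o_{\mathbb P}(n^{-1/2}), \qquad \hat V_n^2 = \int_\epsilon^1 \bigl(\mathbb D_n(t)-\mathbb D_n(1)\bigr)^2\,dt + o_{\mathbb P}(n^{-1/2}),$$
where $\mathbb D_n(t) := 2\int_I g(y|x)\bigl(\hat\Delta(t,y|x)-\Delta_n(y|x)\bigr)\,dy$ up to negligible terms (since $\Delta_n\to g$ uniformly). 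As in \eqref{eq:conv_Tn} the continuous mapping theorem then delivers the joint weak convergence
$\sqrt n\bigl(\hat T_n - \int_I\Delta_n^2,\,\hat V_n\bigr)\;\overset{\mathcal D}{\longrightarrow}\;\tau\bigl(\mathbb B(1),V\bigr),$ with $V:=\bigl[\int_\epsilon^1(\mathbb B(t)/t-\mathbb B(1))^2\,dt\bigr]^{1/2}$ and $\tau$ the nuisance constant of \eqref{det115} evaluated at $g$, where crucially the same Brownian motion $\mathbb B$ appears in both coordinates.

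To finish, I would mirror the identity \eqref{det95} and write
$$\mathbb P\bigl(\hat T_n > \Delta^2 + q_{1-\alpha}\hat V_n\bigr) = \mathbb P\!\left(\frac{\hat T_n - \int_I\Delta_n^2}{\hat V_n} > q_{1-\alpha} - \frac{\int_I\Delta_n^2 - \Delta^2}{\hat V_n}\right),$$
substitute the expansion from paragraph one for the rightmost fraction, and apply Slutsky. The first ratio converges to the pivotal $\mathbb W$ of \eqref{eq:W}, while the deterministic drift $(2\delta/\sqrt n)/\hat V_n$ converges (jointly with the numerator, through the shared Brownian motion) to a strictly positive constant of the form $\mathrm{const}\cdot\delta/V$, giving the right-hand side of \eqref{det93}; the strict inequality then follows immediately from the positivity of this drift whenever $\delta>0$. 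The main obstacle I foresee is the first step of paragraph two, namely transporting the weak convergence~\eqref{det100} from the fixed-parameter setting to the drifting sequence $H_{1,n}$: once uniformity of the empirical-process bounds and the Z-estimator expansion over shrinking neighborhoods is secured, the rest of the proof is bookkeeping on the expansions already appearing in Theorem~\ref{thm4}.
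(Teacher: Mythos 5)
Your proposal follows the same route as the paper's (very terse) proof: transport the weak convergence \eqref{det100} to the drifting sequence by inspecting the proof of Theorem \ref{thm3}, rerun the decomposition \eqref{det95} with the moving centering $\int_I\Delta_n^2(y|x)\,dy$, and read off the non-centrality from the $n^{-1/2}$ gap between $\int_I\Delta_n^2$ and $\Delta^2$. Your filled-in bookkeeping (the expansion $\int_I\Delta_n^2=\Delta^2+2\delta/\sqrt n+O(1/n)$, the joint convergence of $\sqrt n(\hat T_n-\int_I\Delta_n^2,\hat V_n)$, Slutsky) is exactly the "similar arguments as in the proof of Theorem \ref{thm4}" that the paper invokes, so in that respect you have done more than the paper does.

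Two points deserve attention. First, the strict inequality in the second line of \eqref{det93} does \emph{not} "follow immediately from the positivity of the drift": a.s.\ positivity of $\delta/V$ with $V=\bigl[\int_\epsilon^1(\mathbb B(t)/t-\mathbb B(1))^2dt\bigr]^{1/2}$ only gives $\mathbb P(\mathbb W>q_{1-\alpha}-\delta/V)\ge\mathbb P(\mathbb W>q_{1-\alpha})$; to get strict inequality you must rule out that the event $\{q_{1-\alpha}-\delta/V<\mathbb W\le q_{1-\alpha}\}$ is null. The paper does this explicitly by observing that $V^2$ has a Lebesgue density and is a.s.\ positive, whence the distribution function of $\mathbb W$ is strictly increasing; this is the one step of the published proof you actually omit, and it should be added. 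Second, your drift computation honestly yields $(\int_I\Delta_n^2-\Delta^2)/\hat V_n\to 2\delta/(\tau V)$ rather than $\delta/V$, and you hedge with "$\mathrm{const}\cdot\delta/V$". That hedge is appropriate: the exact constant depends on how $\tau$ and the normalization of $(g,h)$ interact, and any final write-up must reconcile it with the constant appearing in \eqref{det93}. Since the constant is in any case strictly positive, the qualitative conclusion (asymptotic power strictly exceeding $\alpha$) is unaffected.
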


\begin{proof} 
By an inspection of the proof of Theorem \ref{thm3} it is easy to see that the weak convergence in \eqref{det100} also holds under the local alternatives \eqref{localhypothese}. Consequently, the first equality follows by similar arguments as given in the proof of Theorem \ref{thm4}. Moreover,   the random variable 
$\int_\epsilon^1 \big(\mathbb{B}(t)/t-\mathbb{B}(1) \big)^2 dt$
has a Lebesgue density and is therefore positive with probability equal to $1$. The same property implies that the distribution function of the random variable $ \mathbb{W} $
in \eqref{eq:W}
is strictly increasing, which proves  the  strict inequality in the second line of \eqref{det93}.
\end{proof}

\begin{remark} 
\label{remchoceofdelta} 
{\rm 
 Note that the hypotheses  in \eqref{hypotheses}
 and \eqref{alternative} are nested.   Therefore,  it follows  that the rejection of the null hypothesis \eqref{hypotheses}
  by the test \eqref{dec_rule}
  for $\Delta= \Delta_0$ also yields (asymptotically) rejection of $H_{0}$ 
  for all  $\Delta\leq \Delta_0$.
  By the sequential rejection principle, we may simultaneously test the  hypotheses  in \eqref{hypotheses}  and \eqref{alternative}  for different $\Delta \geq 0$ 
  starting at $\Delta  = 0$ and 
   increasing  $\Delta $ to 
   find the minimum value of $\Delta $, say
  \begin{align}
     \label{det113}
 \hat \Delta_\alpha:= \min \Big \{ \big \{ 0  \} \cup  
 \big \{\Delta \ge 0 \,| \,  \hat T_n > \Delta^2 + q_{1-\alpha} \hat V_n  \big  \} \Big \} = \big \{   (
 T_n -  q_{1-\alpha} \hat V_n  )
\lor 0  \big \}^{1/2} 
 \end{align}
   for which 
   $H_0$  in \eqref{hypotheses} is not rejected.
  The quantity  $\hat \Delta_\alpha $ could be interpreted as a measure of evidence against the null hypothesis in \eqref{hypotheses}   (note that  the null hypothesis
  is accepted  for all  thresholds 
  $ \Delta \geq  \hat \Delta_\alpha $ and rejected for 
  $ \Delta <   \hat \Delta_\alpha $).
  This means that higher values of $\hat \Delta_\alpha$ yield stronger evidence against the null hypothesis of a small difference.
  In this sense, the question of a reasonable
choice of the threshold $\Delta$ to test the relevant hypotheses may be postponed until after seeing the data. 

Moreover, if one is not directly interested in testing, a pivotal confidence interval for the $L^2$-distance between the conditional distributions can be constructed. To be precise, note that it follows from   the proof of Theorem \ref{thm4}  
that
$$
\lim_{n \to \infty  } \mathbb{P}
\Big ( \frac{\hat T_n - \int_I \Delta^2(y|x) dy}{\hat V_n}  \leq  q_{1- \alpha}  
\Big )  =  1- \alpha~, 
$$
and  therefore an 
asymptotic one-sided $(1-\alpha)$-confidence interval for  the deviation $\| \Delta^2(\cdot |x) \|_2 $ between the two conditional distribution functions is given by 
\begin{align*}
   [\hat \Delta_\alpha,\infty)=   \Big [ \big \{   (
 \hat T_n -  q_{1-\alpha} \hat V_n  )
\lor 0  \big \}^{1/2}  ,\infty    \Big )~.
\end{align*}
}
\end{remark}

\begin{remark} 
\label{remeps}
{\rm 
 Note that the test \eqref{dec_rule} depends on the specification of the constant $\epsilon> 0$,  which has been introduced in the definition of the statistic $\hat V_n$ to achieve numerical stability. We give a heuristic argument that the test is not very sensitive with respect to this choice.  To see this, we define the random variable 
\begin{equation}
\label{hol51}
\mathbb{V}_\epsilon =\Big(
\int_\epsilon^1  \big(\mathbb{B}(t)/t-\mathbb{B}(1)\big)^2  dt
\Big)^{1/2} ,
\end{equation}
corresponding to the denominator in \eqref{eq:W},
$M^2 = \int_I  \Delta^2 (y|x) dy $ as the  squared $L^2$-norm of the difference of the conditional distribution functions and make the dependence of the quantile $q_{1-\alpha}$ on the distribution of $\mathbb{W} = \mathbb{B}(1)/ \mathbb{V}_\epsilon  $ more explicit using the notation $q_{1-\alpha} (\mathbb{W}) $.
With these notations we obtain for the probability of rejection
\begin{align}  
\mathbb{P}\big( \hat T_n > \Delta^2 + q_{1-\alpha} (\mathbb{W} )  \hat V_n \big) 
& \approx 
 \mathbb{P}\Big( \mathbb{B}(1) > \frac{\sqrt{n} 
  (\Delta - M^2)}{\tau}  + \mathbb{V }_\epsilon \cdot q_{1-\alpha}(\mathbb{B}(1)/ \mathbb{V}_\epsilon )   \Big)  ,  \label{h40a}
\end{align}
where we have used the weak convergence in \eqref{eq:conv_Tn} for the approximation of the probabilities.
Observe that in the last expression  only the quantity $\mathbb{V }_\epsilon \cdot q_{1-\alpha}( \mathbb{B}(1)/\mathbb{V}_\epsilon)$ depends 
on the constant $\epsilon$, which enters in the definition of the random variable $\mathbb{V}_\epsilon$.
However,  for fixed $v>0 $ we have $ {v }\cdot q_{1-\alpha}( \mathbb{B}(1)/v) = q_{1-\alpha}( \mathbb{B}(1) )$,
which gives a  heuristic explanation  why 
the 
probability in  \eqref{h40a} is not very sensitive with respect to the choice of $\epsilon$ (which was also observed empirically, see Table \ref{tab:level1}  in Section \ref{sec4}). 
Note that the same argument applies if the Lebesgue measure in the definition of  the statistic $\hat V_n$ is replaced by a different measure. 
}
\end{remark}

\begin{remark} ~~
\label{remtimeseries} 

{\rm 
   (a) It follows from its proof that  the  statement of Theorem \ref{thm4} is also valid under less restrictive assumptions, for example  under time series assumptions. More precisely, Theorem \ref{thm4} remains true and the decision rule \eqref{dec_rule} defines a pivotal, consistent and  asymptotic level $\alpha$-test for the hypotheses \eqref{hypotheses}, whenever the weak convergence in \eqref{det100} can be established.  Now a careful inspection of the proofs in the Appendix (in particular the proof of Theorem \ref{thm1} and \ref{thm2}) shows that such results can be obtained under appropriate mixing  \citep{Bradley.2007}, physical dependence \citep{Wu2005} or $m$-approximability \citep{HrmannKok} conditions (however a proof of a statement corresponding to Theorem \ref{thm2} would change substantially).
A similar comment can be made regarding the independence assumption on the two samples.
\medskip

(b) The limit distribution of the statistic in \eqref{eq:conv_Tn} is not pivotal  if $\Delta (y|x) \equiv 0 $.  To see this, note that it follows by  the weak convergence \eqref{det100}, the definition of $T_n$ and $\hat V_n$ in \eqref{det97} and \eqref{det96}, respectively, and the continuous mapping theorem  that 
\begin{align}
    \label{det89}
 \frac{\hat T_n }{\hat V_n} \overset{\mathcal D}{\longrightarrow} { \int_I \mathbb{H}^2(1,y) dy \over
 \big [ 
 \int_\epsilon^1 \big\{\int_I\mathbb{H}^2(t,y) dy - \int_I \mathbb{H}^2(1,y) dy\big\}^2 dt \big]^{1/2}} 
\end{align}
where $\mathbb{H}$ denotes the Gaussian process on  $[\epsilon,1]\times I $  with covariance structure defined in equation \eqref{det3} of the Appendix. Consequently, the distribution of the right hand side of \eqref{det89} is not pivotal and  its quantiles cannot be used for a decision rule which rejects 
the null hypothesis \eqref{hypotheses} with $\Delta=0$  for large values of ${\hat T_n }$.
\medskip

(c) 
Note that in the formulation of the hypothesis pair, we have fixed the value of the predictor $x$. This is particularly relevant for our empirical application, where we compare the tests' results for different values of $x$. The approach could easily be extended to an aggregation over the regressors.
}
\end{remark}

\begin{remark}
{\rm The test approach can in principle also be used for detecting ``relevant endogeneity'', compare \citet{chernozhukov:2020}, \cite{sanchez:2020}, \citet{wied:2024} and \citet{chernozhukov:2025} for results about endogeneity in distribution regression models. \citet{wied:2024} considers another conditional distribution function, in which potential endogeneity of one regressor $Y_2$ is addressed and an instrument $Z$ is available. This function is defined as 
$$
F^V_{Y|X,Y_2}(y|x,y_2) = \int P(I_y^* \geq 0|X=x,Y_2=y_2,V=v) d F_V(v)
$$
with
\begin{eqnarray*}
I_y^* &=& X'\beta_1(y) + Y_2 \beta_2(y) + U(y) \\
Y_2 &=& X'\gamma_1 + Z'\gamma_2 + V.
\end{eqnarray*}
Here, $U(y)$ and $V$ are latent variables with mean zero, for which there exists a decomposition 
\begin{equation*}\label{decomposition}
U(y) = \alpha_1 V + \alpha_2 \epsilon(y).
\end{equation*}
In this one-sample setting, one could test for relevant endogeneity with the hypothesis pair
\begin{eqnarray*}
&H_0:& || F_{Y|X,Y_2}(y|x,y_2) - F^V_{Y|X,Y_2}(y|x,y_2) || \leq \Delta \\
&H_1:& || F_{Y|X,Y_2}(y|x,y_2) - F^V_{Y|X,Y_2}(y|x,y_2) || > \Delta.
\end{eqnarray*}
The structure of the test statistic would be the same, using appropriate estimators of $F_{Y|X,Y_2}(y|x,y_2)$ and $F^V_{Y|X,Y_2}(y|x,y_2)$. While an  estimator for the first conditional distribution function  is obtained similarly as described in Section \ref{sec2}, the estimation of the function $F^V_{Y|X,Y_2}(y|x,y_2)$ requires the use of a control function and an additional average step, as described in \citet{wied:2024}.
}
\end{remark}

\section{Monte Carlo simulations}\label{sec4}
 \def\theequation{4.\arabic{equation}}	
   \setcounter{equation}{0}

In this section, we demonstrate the application of the proposed testing procedure \eqref{dec_rule} and present finite sample evidence. Throughout this section, we assume the same sample size for both groups, i.e. $n=n_1=n_2$. We consider two different scenarios, where the first one is given by a simple linear model and the second one is inspired by the application discussed in Section \ref{sec5}. For both scenarios, we will consider different sample sizes and vary the underlying distributions. In addition, we will use one scenario  to investigate the robustness of the procedure with respect to misspecifcation of the link function $\Lambda$.
Further, in order to investigate the effect of the parameter $\epsilon$ in definition of the self-normalizing statistic \eqref{det96}, we will simulate the type I error rates of the test for three different choices of  $\epsilon \in \{0.05,0.1,0.2\}$, in all scenarios under consideration.
All simulations have been run using R Version 4.4.3 with a  total number of $1000$ simulation runs  for each configuration, where  the significance level was chosen as $\alpha=0.05$. 

In Scenario 1 we assume that
\begin{equation}\label{sim:scen1}
    Y_1=1+X_{12}+U_1,\ Y_2=1.7+X_{22}+U_2,
\end{equation}
where for both groups $\ell=1,2$, the error term $U_\ell\sim \mathcal{N}(0,\sigma^2)$ is independent of the regressor $X_{\ell 2}\sim \mathcal{N}(0,1)$, such that $F^\ell_{Y|X}(y|x) = \Lambda(x^\top \beta_\ell(y))$, where $x^\top=(1,x_{12}) $. The  link function $\Lambda$ is the distribution function of a normal distribution with mean $0$ and variance $\sigma^2$ such that   $\beta_1(y)=(y-1,-1)$, $\beta_2(y)=(y-1.7,-1)$.
For a given $x^\top =(1,1)$ we obtain, depending on the choice of $\sigma^2$, the following underlying true values 
\begin{equation}\label{sim:scen1_values}
|| F^1_{Y|X}(\cdot  |(1,1)) - F^2_{Y|X}(\cdot |(1,1)) ||^2_2= 
\begin{cases}
    0.256 \text{  for $\sigma^2=0.25$}\\
    0.135 \text{  for $\sigma^2=1$}\\
    0.066 \text{  for $\sigma^2=4$}
\end{cases}
\end{equation}
for $y \in I=[-1,5]$.

Figure \ref{fig:scen1} displays the empirical rejection probabilities of the test \eqref{dec_rule} for the three choices of the variance $\sigma^2$ and three different sample sizes $n=200,500,1000$ in dependence of the threshold $\Delta^2$, where the parameter $\epsilon $ in the self-normalizing statistic is chosen as $\epsilon =0.1$. 
For all three sub-figures, the red vertical dashed line indicates the true underlying value of the squared $L^2$-distance between the conditional distribution functions given in \eqref{sim:scen1_values} and thus the margin of the null hypothesis in \eqref{hypotheses}. Therefore, the empirical rejection probabilities on the right of this line correspond to the situation under the null hypothesis and thus display simulated type I error rates, whereas the values on the left of this line correspond to the alternative and hence represent the simulated power.  
We observe that all simulated type I error rates are close to or well below the chosen significance level of $\alpha=0.05$, approaching $0$ for larger choices of the threshold $\Delta^2$.

Next we investigate the sensitivity of the test \eqref{dec_rule} with respect to the choice of the parameter $\epsilon$ in the self-normalizing statistic 
\eqref{det96}. We concentrate on the margin 
$|| F^1_{Y|X}(\boldsymbol{\cdot}|x) - F^2_{Y|X}(\boldsymbol{\cdot}|x) ||_2 =  \Delta $ of the hypothesis \eqref{hypotheses}
and display in 
 Table \ref{tab:level1} the simulated type I error of the test \eqref{dec_rule} for  three different choices of  $\epsilon \in \{0.05,0.1,0.2\}$. We observe that the choice of $\epsilon$ does not influence the performance of the test, as the results are qualitatively the same across all three configurations under consideration.
 
Finally, considering the situation under the alternative, we note that the simulated power approaches $1$ if the variance under consideration is sufficiently small or the sample size large. These findings underline the theoretical properties of the test stated in Theorem \ref{thm4}.

\begin{figure}[h]
\begin{center}
\includegraphics[width=0.32\textwidth]{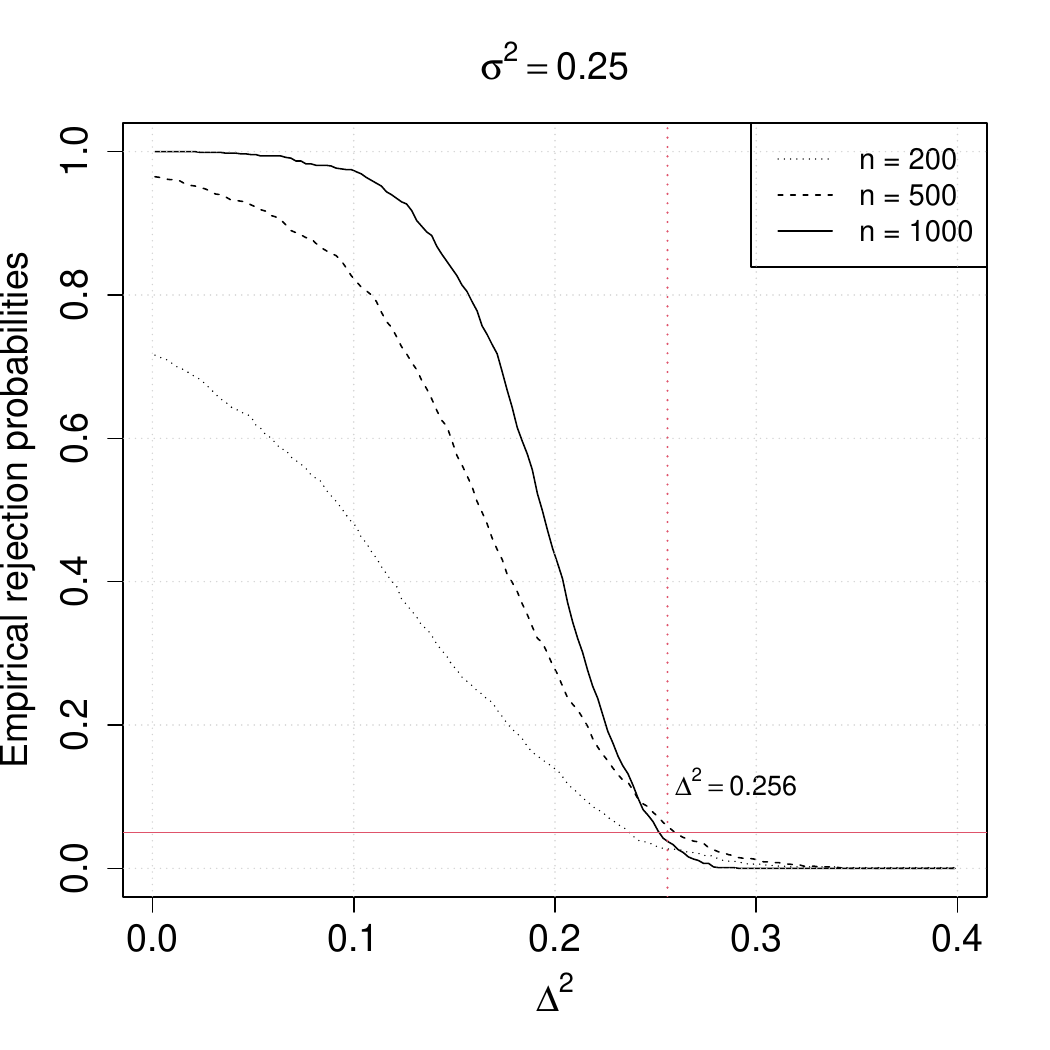}
\includegraphics[width=0.32\textwidth]{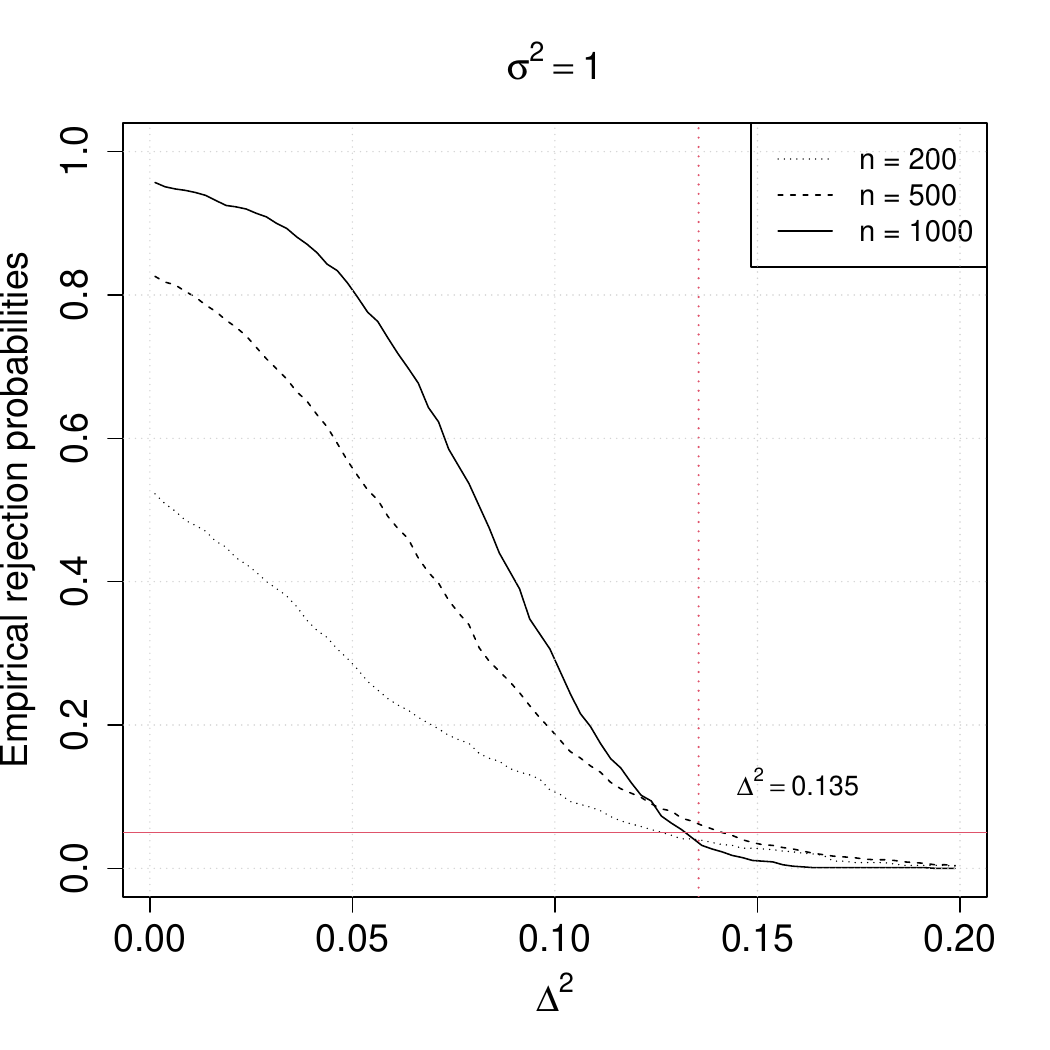}
\includegraphics[width=0.32\textwidth]{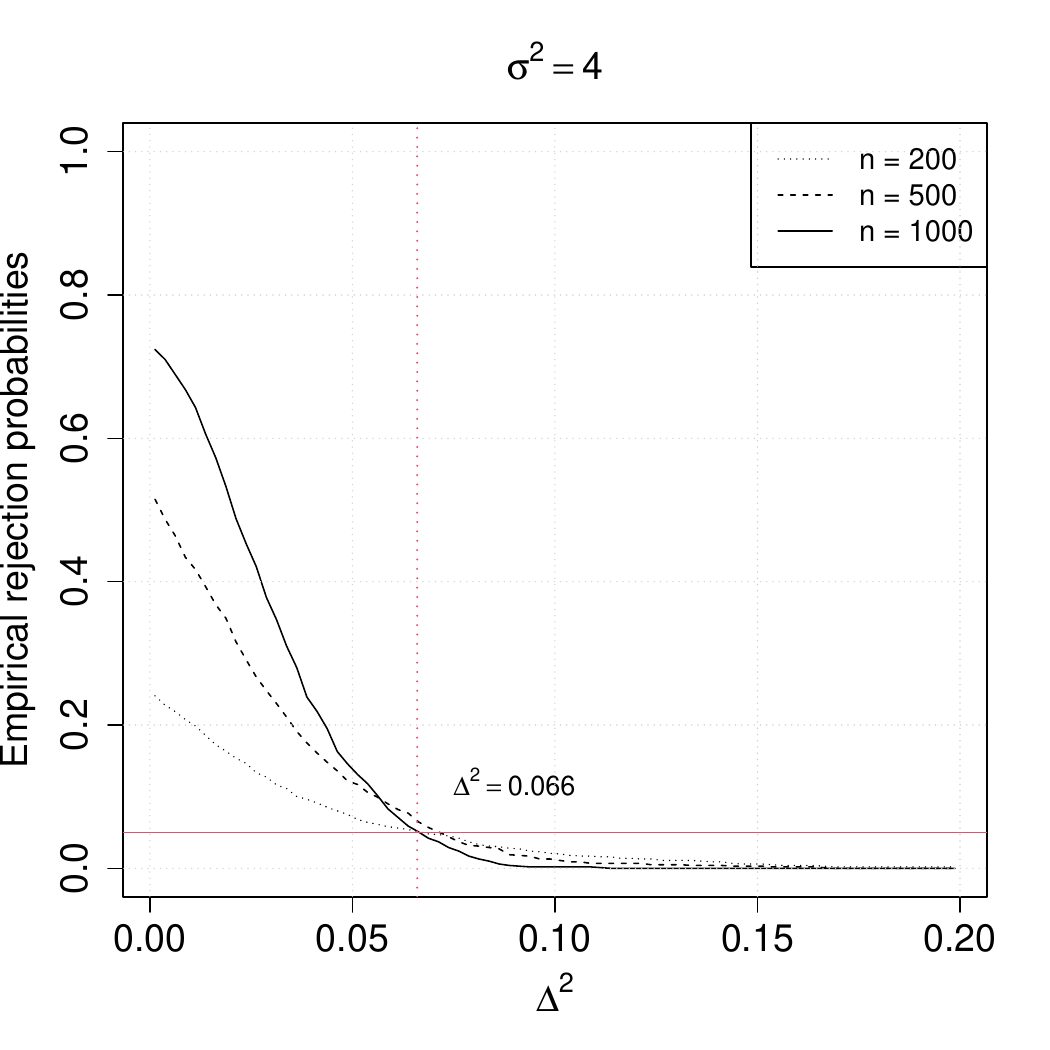}
\caption{Empirical rejection probabilities of the test \eqref{dec_rule} for $\epsilon=0.1$, three choices of $\sigma^2$ and different sample sizes  in dependence of the threshold $\Delta^2$. The red solid line indicates the significance level of $\alpha=0.05$, the red dashed line indicates the true underlying value and thus the margin of the null hypothesis in \eqref{hypotheses}.}
\label{fig:scen1}
\end{center}
\end{figure}

We continue investigating  the robustness of the test procedure. For this purpose  we  assume that the true underlying link function is given by the distribution function of a  logistic distribution with location parameter $0$ and scale parameter $1$, whereas estimation is performed under the assumption of a standard normal distribution. For this scenario the true underlying value  of the squared $L^2$-distance between the conditional distribution functions is given by  
$$\| F^1_{Y|X}(\cdot |(1,1)) - F^2_{Y|X}(\cdot | (1,1)) ||^2_2= 0.080,
$$ 
where $F^1$ and $F^2$ now denote the distribution functions of the standard logistic distribution. The functions $F^1_{Y|X}(\cdot |(1,1)) $ and 
$F^2_{Y|X}(\cdot |(1,1))$ are shown in the left panel of Figure \ref{fig:scen1_robustness}, while the right panel shows the simulated rejection probabilities of the test \eqref{dec_rule}. We observe that even in this situation of misspecification, the test keeps its nominal level for all three sample sizes under consideration while still achieving reasonable power, approaching a maximum of $0.8$ for the largest sample size of $n=1000$.

\begin{figure}[h]
\begin{center}
\includegraphics[width=0.49\textwidth]{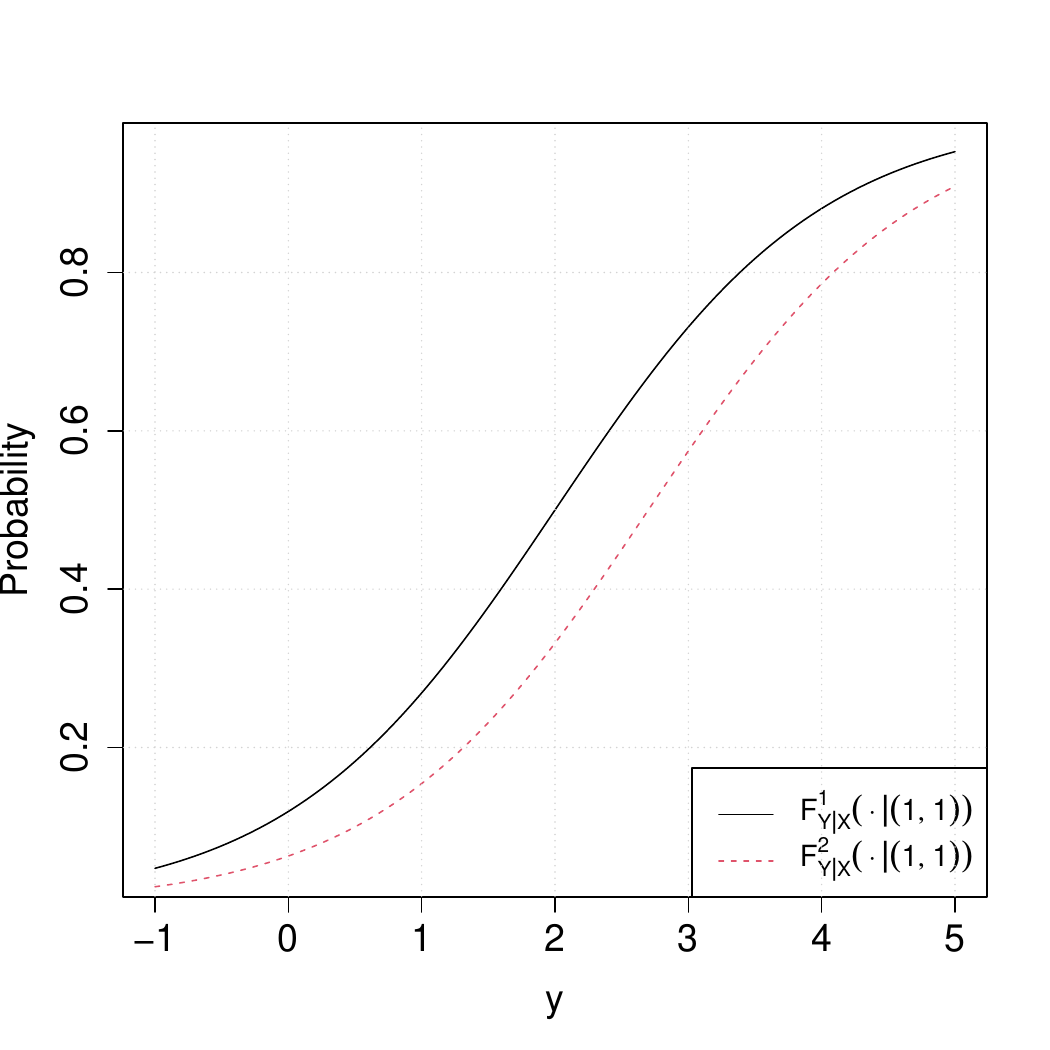}
\includegraphics[width=0.49\textwidth]{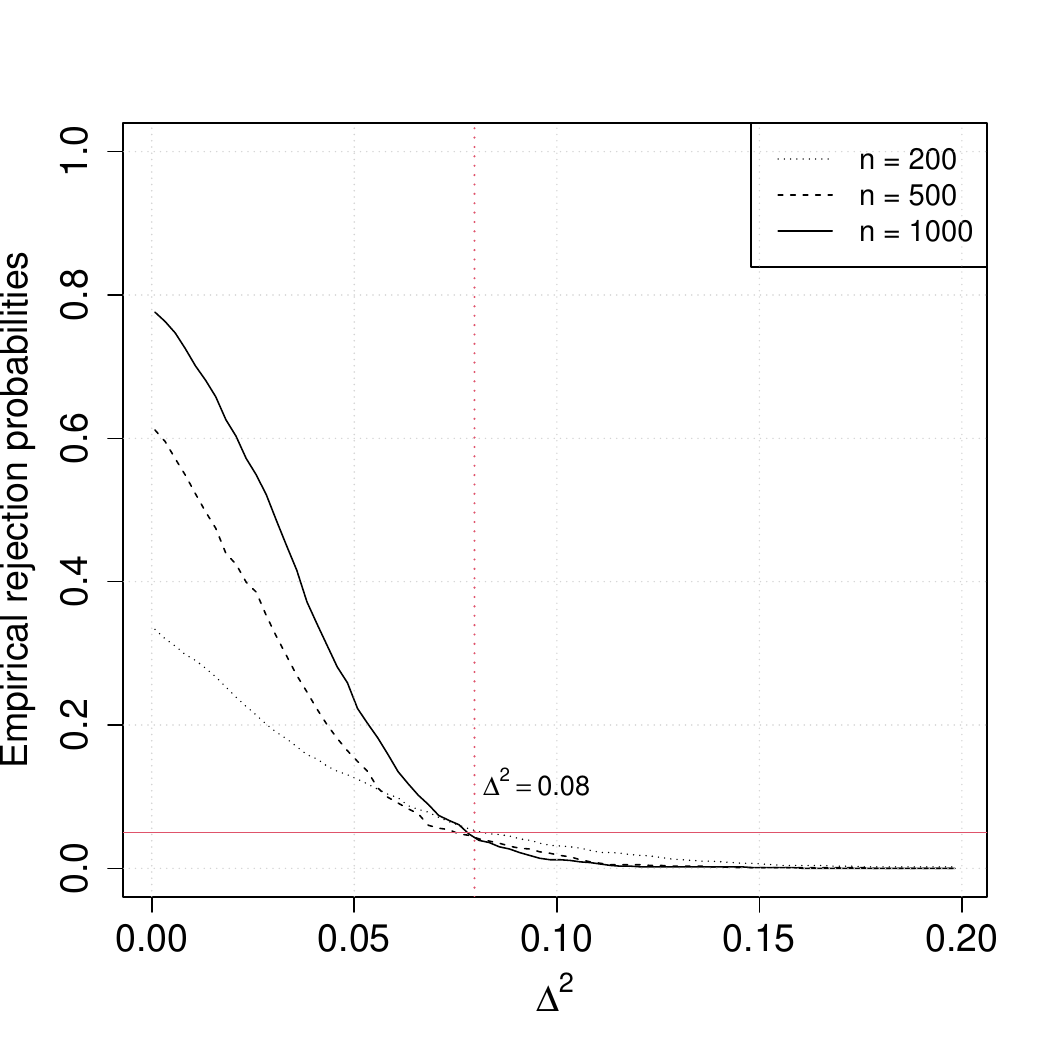}
\caption{Left panel: The true underlying conditional distribution functions
$F^1_{Y|X}(\cdot |(1,1)) $ and 
$F^2_{Y|X}(\cdot |(1,1))$ (link function is a logistic distribution). Right panel: Empirical rejection probabilities of the test \eqref{dec_rule} for  different sample sizes  in dependence of the threshold $\Delta^2$, where estimation is performed under the assumption of
a standard normal distribution as  link function. The red solid line indicates the significance level of $\alpha=0.05$, the red dashed line indicates the true underlying value and thus the margin of the null hypothesis in \eqref{hypotheses}.}
\label{fig:scen1_robustness}
\end{center}
\end{figure}

In a second scenario we now consider a more complex setting which is inspired by the real data application presented in Section \ref{sec5}. Precisely we assume that
\begin{eqnarray}
    \label{sim:scen2}
    Y_1&=&1+X_{12}+X_{13}+X_{13}^2+X_{14}+U_1,\nonumber\\ Y_2&=&1.7+X_{22}+X_{23}+X_{23}^2+X_{24}+U_2,
\end{eqnarray}
where the regressors $(X_{\ell 2},X_{\ell 2})^\top $ are bivariate normal distributed, that is  
$$(X_{\ell 2},X_{\ell 3})\sim \mathcal{N}_2(\textbf{0},\Sigma)\text{,~~~ with } \Sigma=  \sigma^2 \cdot \begin{pmatrix} 1 & 0.5 \\ 0.5 & 1 \end{pmatrix}, 
$$
$\sigma\in\{0.5,1,2\}$,
$X_{\ell 4}$ is binomial distributed, that is
$X_{\ell 4}\sim \Bin(n_\ell,0.5),$ 
and the errors $U_\ell$ are independent  centered normal distributed with variance 
$\sigma^2$. As a consequence, the link function $\Lambda$ is the distribution function of a normal distribution with mean $0$ and variance $\sigma^2$.
For a given $x=(1,1,1,1,1)$ we obtain the following underlying true values
\begin{equation}\label{sim:scen2_values}
|| F^1_{Y|X}(\cdot |(1,1,1,1,1)) - F^2_{Y|X}( \cdot |(1,1,1,1,1)) ||^2_2= 
\begin{cases}
    0.256 \text{  for $\sigma^2=0.25$}\\
    0.135 \text{  for $\sigma^2=1$}\\
    0.066 \text{  for $\sigma^2=4$}
\end{cases}
\end{equation}
for $y \in I=[2,8]$ and we denote this scenario by Scenario 2a. A visualization of the true underlying conditional distribution functions for $\sigma=1$ is given in   the upper left panel of Figure \ref{fig:scen2}.
Compared to model \eqref{sim:scen1}, the model \eqref{sim:scen2} is more complex and the estimation of the corresponding  parameters requires  larger sample sizes   to obtain reasonable results. Motivated by the size of the real data set in Section \ref{sec4}, we now consider sample sizes given by $n=1000,2000,5000,10000$ for the simulations.

In the upper part  of Figure \ref{fig:scen2} we display the empirical rejection probabilities for the medium variance of $\sigma^2=1$ and the large variance of $\sigma^2=4$, respectively. Again, the vertical red line indicates the margin of the null hypothesis in \eqref{hypotheses}. We observe that, as expected, the power increases with increasing sample sizes, while the approximation of the level is very precise for all sample sizes under consideration. We conclude that the test is able to detect relevant differences with a high degree of accuracy, even when the underlying models are complex and involve the estimation of many parameters.

\begin{figure}[h]
\begin{center}
\includegraphics[width=0.32\textwidth]{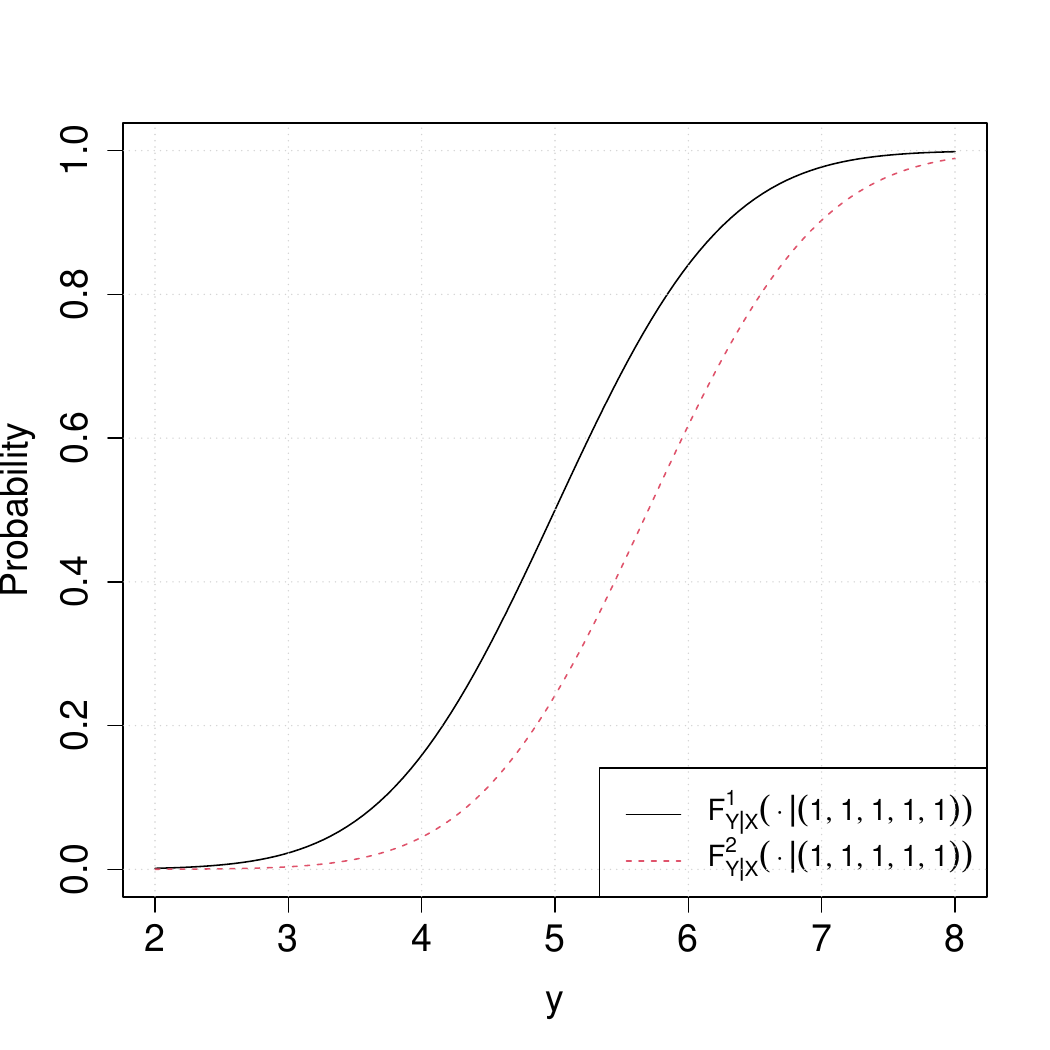}
\includegraphics[width=0.32\textwidth]{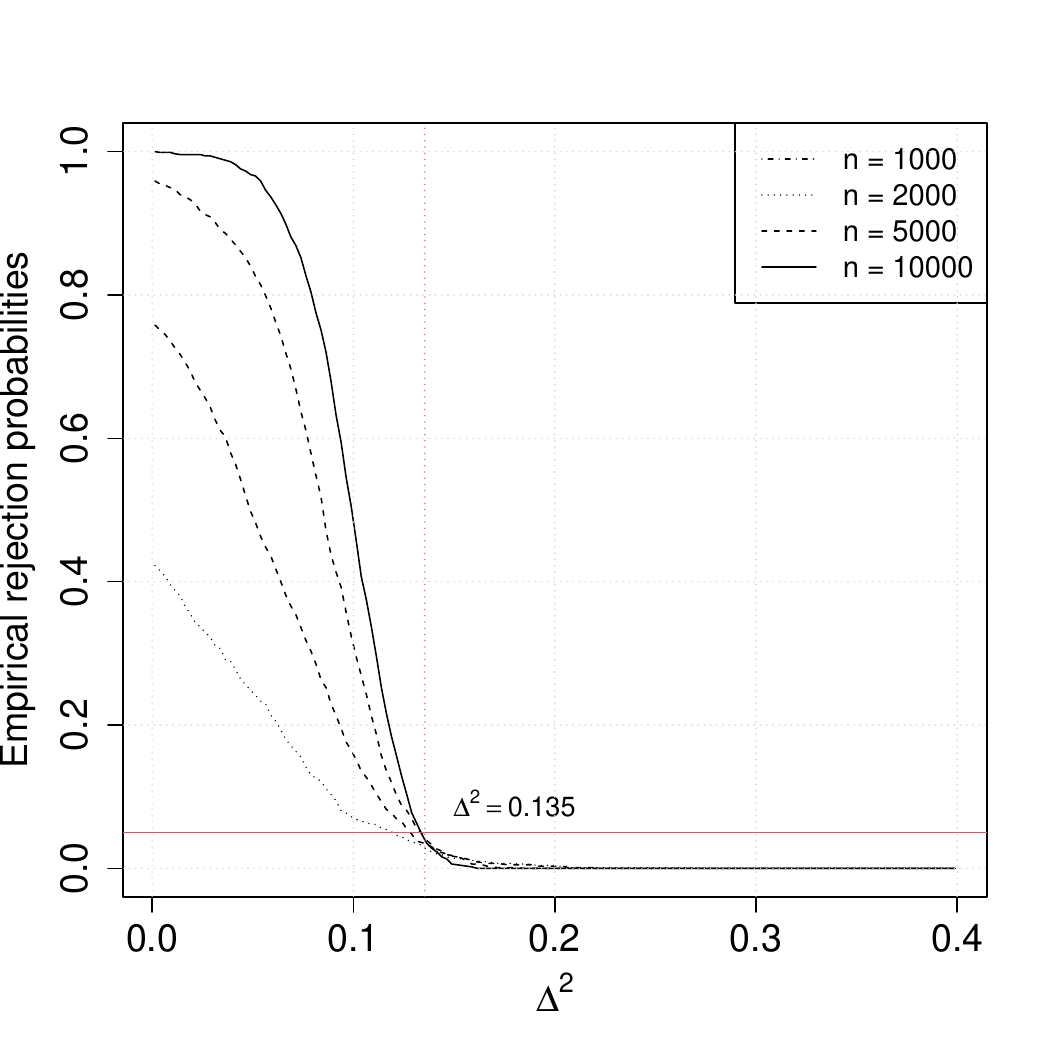}
\includegraphics[width=0.32\textwidth]{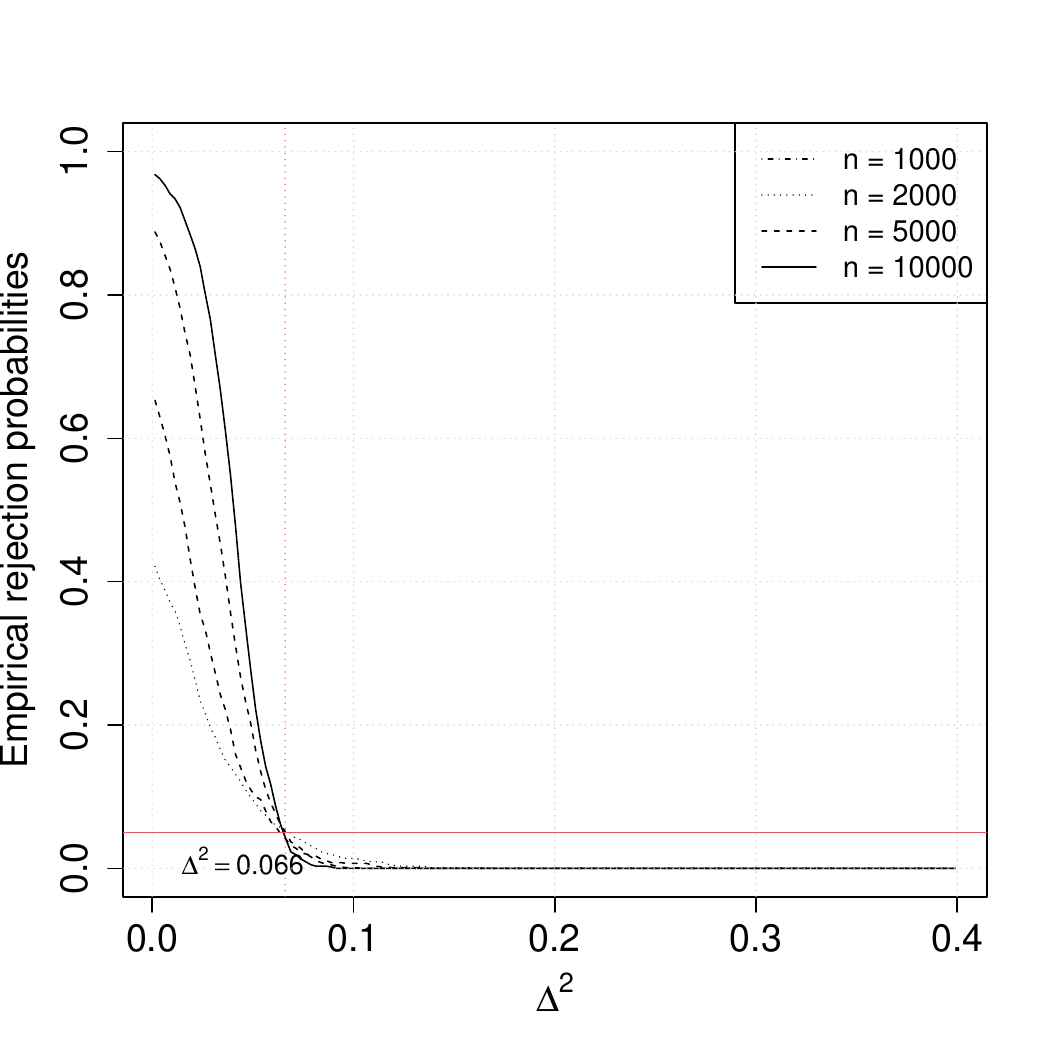}
\includegraphics[width=0.32\textwidth]{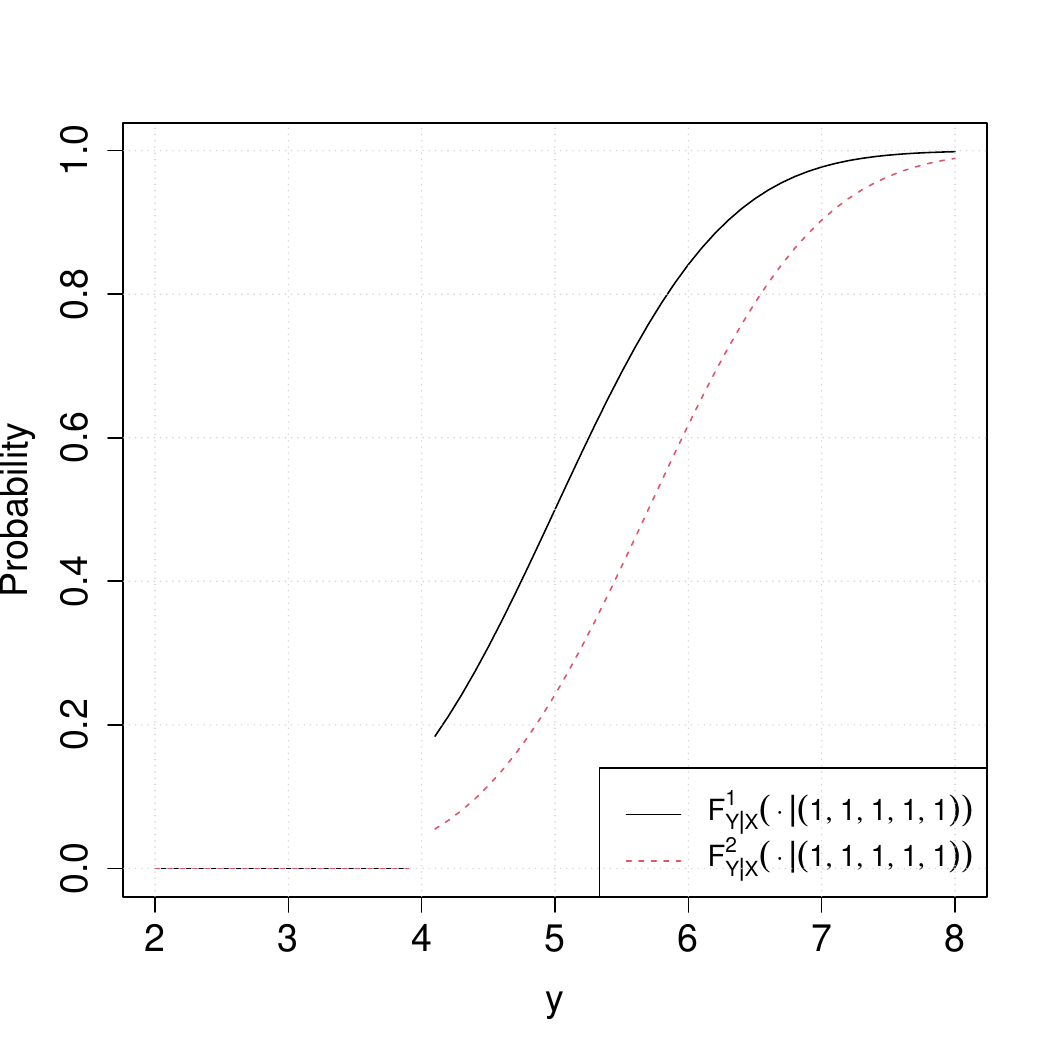}
\includegraphics[width=0.32\textwidth]{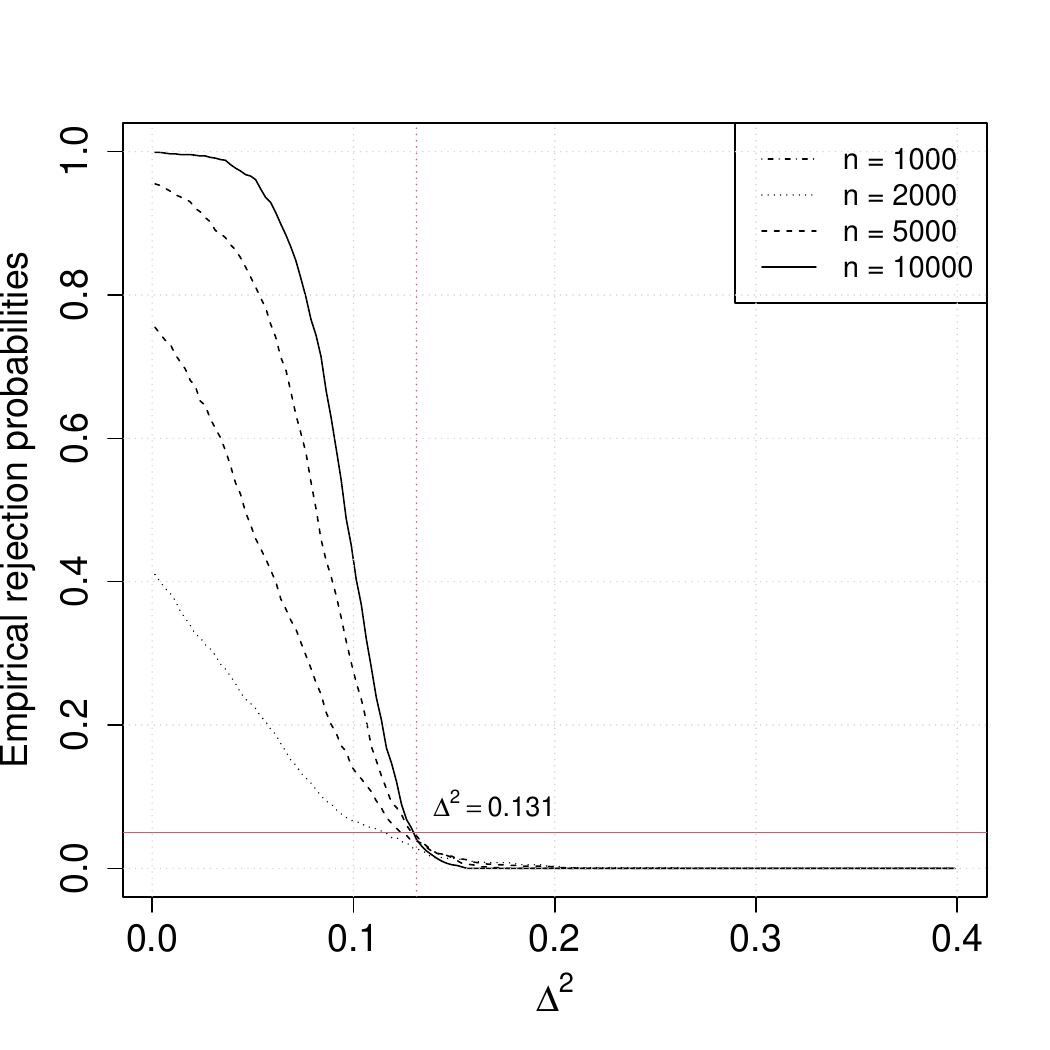}
\includegraphics[width=0.32\textwidth]{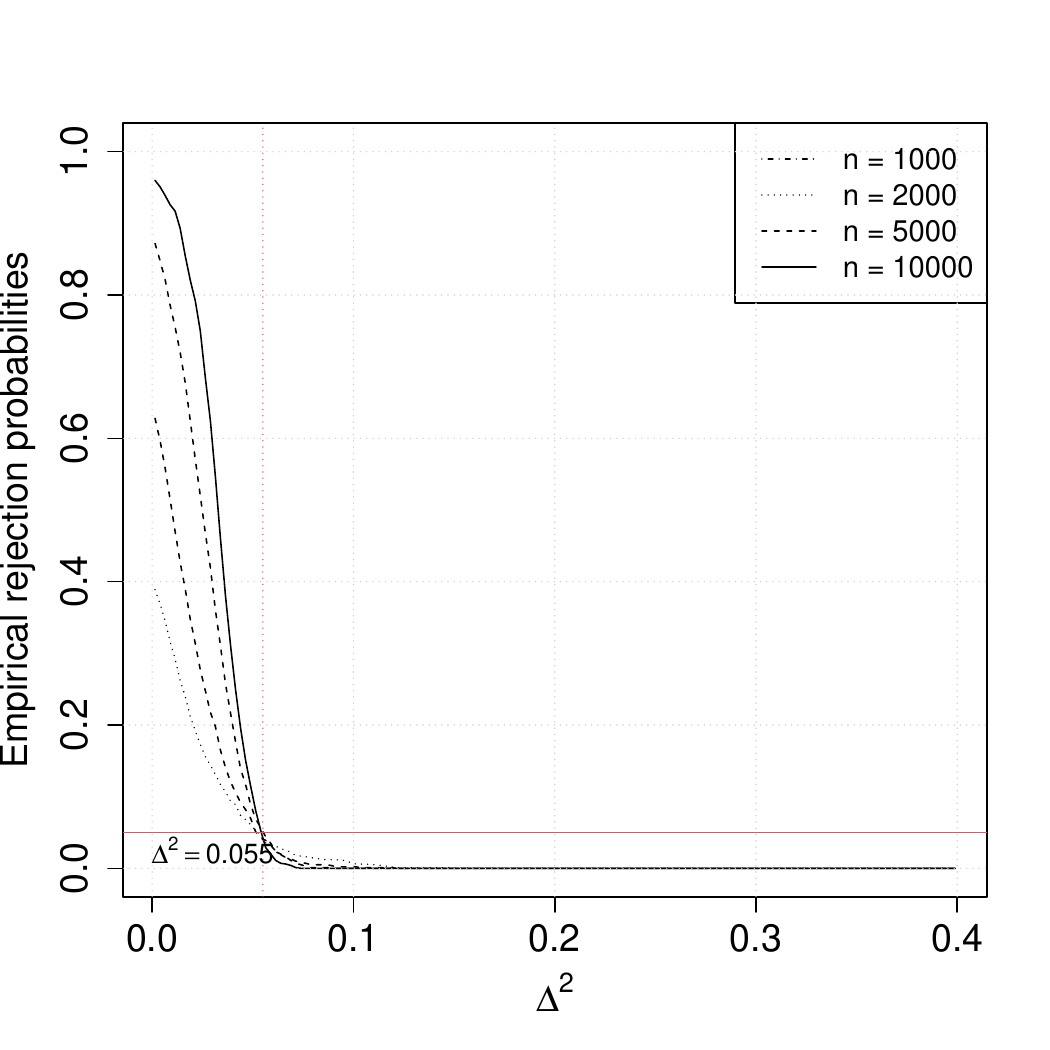}
\caption{
Left column: The true underlying conditional distribution functions 
$F^1_{Y|X}(\cdot |(1,1,1,1,1)) $ and 
$F^2_{Y|X}(\cdot |(1,1,1,1,1))$ for the model \eqref{sim:scen2} in Scenario 2a (upper row) and Scenario 2b (lower row) assuming $\sigma=1$. 
Center and right column: Empirical rejection probabilities of the test \eqref{dec_rule} for different sample sizes, $\epsilon=0.1$, in dependence of the threshold $\Delta^2$  for $\sigma=1$ (center column) and $\sigma=2$ (right column). The red solid line indicates the significance level of $\alpha=0.05$, the red dashed line indicates the true underlying value and thus the margin of the null hypothesis in \eqref{hypotheses}.
}
\label{fig:scen2}
\end{center}
\end{figure}

Finally, we consider the model
\begin{align}
    \label{det111}
    \tilde Y_\ell  = \max(Y_\ell,4)   ~~(\ell =1,2), 
\end{align}
where $Y_1$ and $Y_2$ are defined in \eqref{sim:scen2}, keeping the distributions of the regressors and the error terms as stated above. This scenario is denoted by Scenario 2b and describes the presence of a jump  with moderate size in the distribution function (see the lower left panel of Figure \ref{fig:scen2}). Such a phenomenon might occur in income data (see \citealp{wied:2024}),and  violates Assumption 1.3. Our simulations show that this violation is not problematic in the types of applications we are interested in.
For a given $x=(1,1,1,1,1)$ we now obtain the following underlying true values
\begin{equation}\label{sim:scen3_values}
|| F^1_{Y|X}(\cdot |(1,1,1,1,1)) - F^2_{Y|X}(\cdot|(1,1,1,1,1)) ||^2_2= 
\begin{cases}
    0.256 \text{  for $\sigma^2=0.25$}\\
    0.131 \text{  for $\sigma^2=1$}\\
    0.055 \text{  for $\sigma^2=4$}
\end{cases}
\end{equation}
for $y \in I=[2,8]$.
The lower row of Figure \ref{fig:scen2}  shows the empirical rejection probabilities for the medium variance of $\sigma^2=1$ (middle column) and the high variance of $\sigma^2=4$ (right column). Compared to Scenario 2a the results are almost exactly the same and the differences are barely visible.
The simulated type I error rates of the test \eqref{dec_rule} at the margin of the hypothesis \eqref{dec_rule}
for 
Scenarios 2a and 2b are  shown in second and third row of Table \ref{tab:level1}, where we consider  different choices 
$\epsilon \in \{0.05,0.1,0.2\}$
for the constant $\epsilon$ in the self-normalizing statistic \eqref{det96}. 
As in  Scenario 1, these results demonstrate 
 that  the choice of $\epsilon$ only slightly influences the performance of the test. 
Furthermore, investigating the effect of the jump point in the distribution function, i.e. comparing Scenario 2a and Scenario 2b, we observe that differences are only visible for the configurations with $\sigma^2=1$ and $\sigma^2=4$. However, even in these cases the results are very similar and the differences are negligible. 

\begin{table}[]
\centering
\begin{tabular}{ll|lll|lll|lll}
    \multicolumn{2}{c}{} & \multicolumn{3}{c}{$\sigma^2=0.25$} & \multicolumn{3}{c}{$\sigma^2=1$} & \multicolumn{3}{c}{$\sigma^2=4$}  \\
    \multicolumn{2}{c}{} & \multicolumn{3}{c}{$\epsilon$} & \multicolumn{3}{c}{$\epsilon$} & \multicolumn{3}{c}{$\epsilon$}  \\
    & $n$ & $0.05$ & $0.1$ & $0.2$ & $0.05$ & $0.1$ & $0.2$ & $0.05$ & $0.1$ & $0.2$ \\ \hline
\multirow{3}{*}{Scen. 1}     & 200       & 0.032      & 0.027    & 0.033     & 0.045     & 0.040     & 0.041     & 0.039           & 0.051          & 0.035          \\
 & 500       & 0.064           & 0.059           & 0.064   & 0.067  & 0.062     & 0.059      & 0.077      & 0.068      & 0.050    \\
  & 1000      & 0.043       & 0.038        & 0.039        & 0.046       & 0.034     & 0.037   & 0.050    & 0.052     & 0.045       \\\hline
\multicolumn{1}{c}{\multirow{4}{*}{Scen. 2a}} & 1000   & 0.030    & 0.026  & 0.027   & 0.030   & 0.030   & 0.036  & 0.048  & 0.054 & 0.041     \\
\multicolumn{1}{c}{}          & 2000      & 0.033    & 0.032    & 0.047    & 0.043  & 0.035  & 0.047    & 0.054  & 0.042    & 0.044          \\
\multicolumn{1}{c}{}           & 5000      & 0.037   & 0.050     & 0.052   & 0.043  & 0.044  & 0.040    & 0.043   & 0.052    & 0.044          \\
\multicolumn{1}{c}{}          & 10000     & 0.044   & 0.043  & 0.047  & 0.049   & 0.040   & 0.056    & 0.051    & 0.042   & 0.043      \\\hline
\multirow{4}{*}{Scen. 2b}    
& 1000      & 0.029   & 0.026    & 0.027     & 0.030     & 0.027     & 0.038    & 0.049   & 0.040   & 0.038     \\
& 2000    & 0.033    & 0.032        & 0.047    & 0.039    & 0.038    & 0.052     & 0.069   & 0.042    & 0.039          \\
& 5000      & 0.037    & 0.050     & 0.052   & 0.046    & 0.045    & 0.040    & 0.053   &   0.051   & 0.047          \\
& 10000     & 0.044       & 0.043    & 0.047    & 0.500  & 0.040    & 0.056    & 0.057     &  0.043    & 0.042         
\end{tabular}
\caption{Simulated type I error rates of the test \eqref{dec_rule} at the margin 
$|| F^1_{Y|X}(\boldsymbol{\cdot}|x) - F^2_{Y|X}(\boldsymbol{\cdot}|x) ||_2 =  \Delta $  of the hypothesis \eqref{hypotheses} for different variance levels $\sigma^2$, different sample sizes and different choices of the parameter $\epsilon$ 
in the self-normalizing statistic \eqref{det96}-
Scenario 1, 2a and 2b  are defined in \eqref{sim:scen1}, \eqref{sim:scen2} and \eqref{det111}, respectively.}
\label{tab:level1}
\end{table}

\section{Application to German income data}\label{sec5}
We apply our new test for detecting relevant changes in the conditional income distribution of German employees based on micro data from the German SOEP (\citealp{soep:2022}) between the years 2013 and 2020. The dependent variable is given by $log(income)^\ell$, $\ell \in \{2013,2020\}$, while the linear predictor is given by Mincer's earning equation
\begin{equation*}
pred^\ell(y) := \beta_0(y) + \beta^\ell_1(y)\ educ^\ell + \beta^\ell_2(y)\ exper^\ell + \beta^\ell_3(y)\ {exper^\ell}^2 + \beta^\ell_4(y)\ partt^\ell.
\end{equation*}
This means that we model the expression
\begin{equation}\label{baselinemodel}
P(log(income)^\ell \leq y | educ^\ell, exper^\ell, partt^\ell) = \Phi\left(pred^\ell(y) \right),
\end{equation}
where $\Phi$ denotes the distribution function of the standard normal distribution. We compare the conditional distribution functions of the logarithmic monthly income given the covariates {\it years of education}, {\it years of working experience}, {\it squared years of working experiences}, {\it part time job yes/no} in the years 2013 and 2020. The sample sizes are given by $16031$ for the year 2020 and $18262$ for the year 2013. We choose this particular time interval to simplify the interpretation of the results: there was a rise in the salary for a {\it Minijob} in 2013 and the Covid pandemic led to more substantial changes in the income structure after 2020.

The goal is to answer the question in which sense the incomes have increased statistically significantly more than the consumer prices. Economists are typically interested in how the link between incomes (or wages) and inflation looks like. For example, \citet{gordon:1988} showed empirically that the link is less strong than expected, while \citet{jorda:2023} show that the link might have become stronger after the pandemic in 2020. In fact, for our data, there is some empirical evidence that incomes increased to a higher extent than consumer prices, with the implication that there are other important drivers of inflation. In general, incomes increased by $17\%$ (\citet{sozialpolitik:2025}), while inflation was around $7.4\%$ (\citet{destatis:2025}), as Table \ref{table:percentages} shows. However, these numbers do not yield information about changes in the whole (conditional) distribution of income.

\begin{table}\label{table:percentages}
\centering
\begin{tabular}{c|ccccccc}
Year & 2014 & 2015 & 2016 & 2017 & 2018 & 2019 & 2020 \\\hline
Income increase &2.9 &3.0 &2.5 &2.7 &3.2 &3.0 &-0.1 \\\hline
Inflation &1.0 &0.5 &0.5 &1.5 &1.8 &1.4 &0.5
\end{tabular}
\caption{Income increase and inflation in percentages in Germany in a particular year compared to the year before}
\end{table}

For the application of the test \eqref{dec_rule} in the present context, we have to specify a value of the threshold  $\Delta$. Our approach considers the $L^2$-distance between  the two conditional distribution functions  (on the interval $I= [2,10]$) corresponding to the years  $2013$ and $2020$ (for a specified set of regressor values) under the assumption that the change of the incomes is exactly the same as the change of the consumer prices. Under this assumption, all coefficients in \eqref{baselinemodel} remain the same except the intercept, which increases by $0.074$. We are interested in the results for three cases: 
$x^\top =(q^{edu}_{0.1}, q^{exper}_{0.1}, m )$,  $x^\top=(q^{edu}_{0.5}, q^{exper}_{0.5}, m^{partt} )$ and $x^\top=(q^{edu}_{0.9}, q^{exper}_{0.9}, m^{partt} )$, where
$q^{edu}_{\alpha}$ and  $q^{exper}_{\alpha}$ denote the empirical $\alpha$-quantile of the regressors 
{\it educ} and {\it exper}, respectively,   
and $ m^{partt} )$  denotes  the empirical mean of the component {\it partt}.
For the three different cases, the values for $\Delta^2$ are chosen as $0.001$, $0.0008$ and $0.0009$, respectively.

Let us first discuss the calculation of the test results. We consider $\epsilon=0.1$, and for this choice  the quantile $q_{0.95}$ is given by $1.7546$. The values of the self-normalizing statistic $\hat V_n$ in \eqref{det96} are given by $0.0223$, $0.0236$, $0.0095$ (for the regressor quantiles of $10\%$, $50\%$ and $90\%$). For the different values of $\Delta^2$, we then obtain 
$$
\hat T_n=0.048 > 0.0419 = 0.001 + 1.7546 \cdot 0.0223$$ for the $10\%$-quantile ($p$-value 0.031), 
$$
\hat T_n=0.028 < 0.0422 = 0.0008 + 1.7546 \cdot 0.0236
$$ for the $50\%$-quantile ($p$-value 0.12) and 
$$
\hat T_n=0.028 > 0.0176 = 0.0009 + 1.7546 \cdot 0.0095
$$ for the $90\%$-quantile ($p$-value 0.012).

This means that, for the lower and upper quantile of the regressors, but not for the median, there is a statistically significant relevant change. Figure \ref{fig:application10}  illustrates the results. The upper panel shows the estimated conditional distribution functions for the three quantiles and the two years. The lower panel shows the $p$-values of the test \eqref{dec_rule} for the different  cases as a function of $\Delta^2$. This curve increases in $\Delta^2$, which is obvious given the construction of the test. For the lower and upper quantile of the regressors, the $p$-values are below $0.05$ for some values of $\Delta^2$. For the median, this is never the case. The estimates of $\hat \Delta_\alpha$ in \eqref{det113}, which yield measures of evidence against the null hypothesis, are given by $0.0097$ for the lower quantile, $0.0129$ for the upper quantile and $0$ for the median. This suggests that the ``more extreme'' employees faced a stronger increase in income compared to the ``moderate'' ones.

\begin{figure}[h]
\begin{center}
\includegraphics[width=0.32\textwidth]{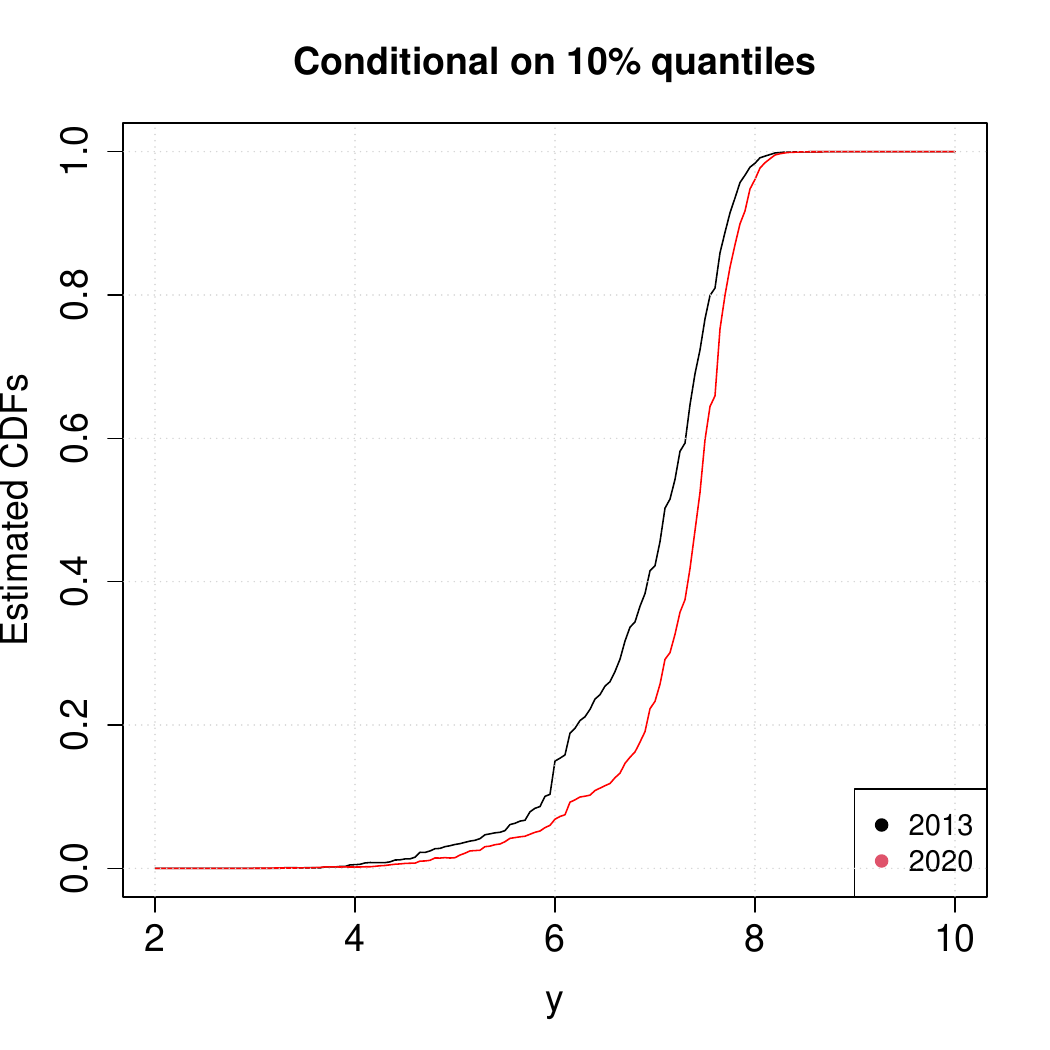}
\includegraphics[width=0.32\textwidth]{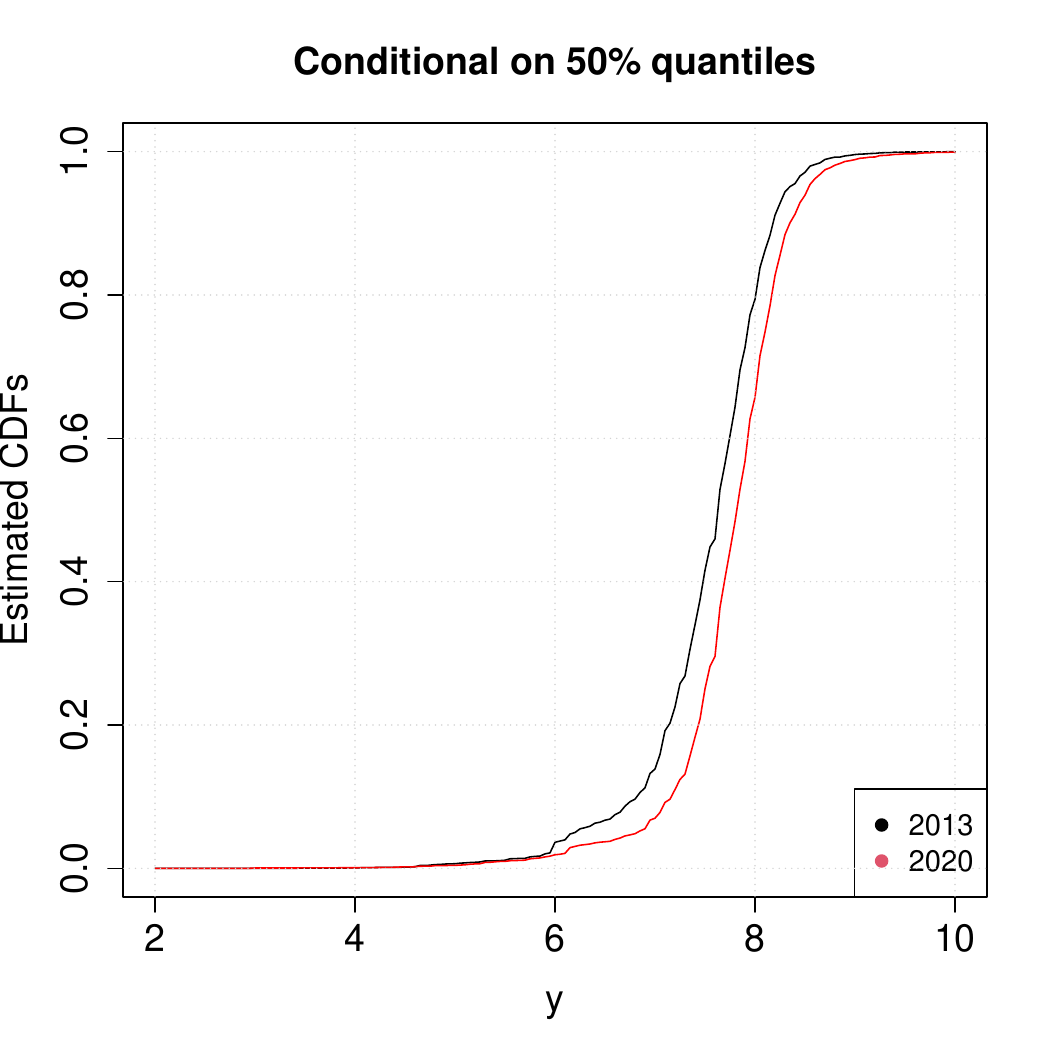}
\includegraphics[width=0.32\textwidth]{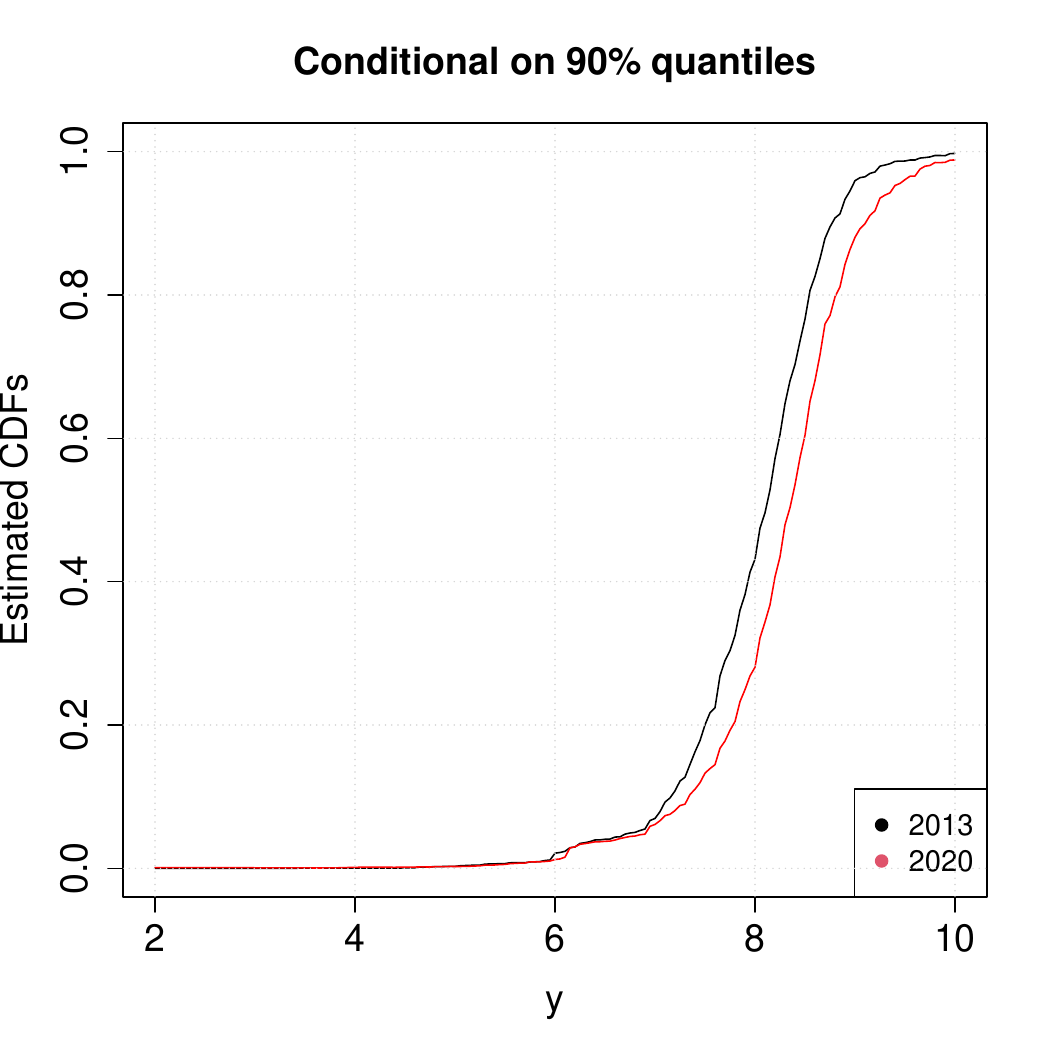}
\includegraphics[width=0.32\textwidth]{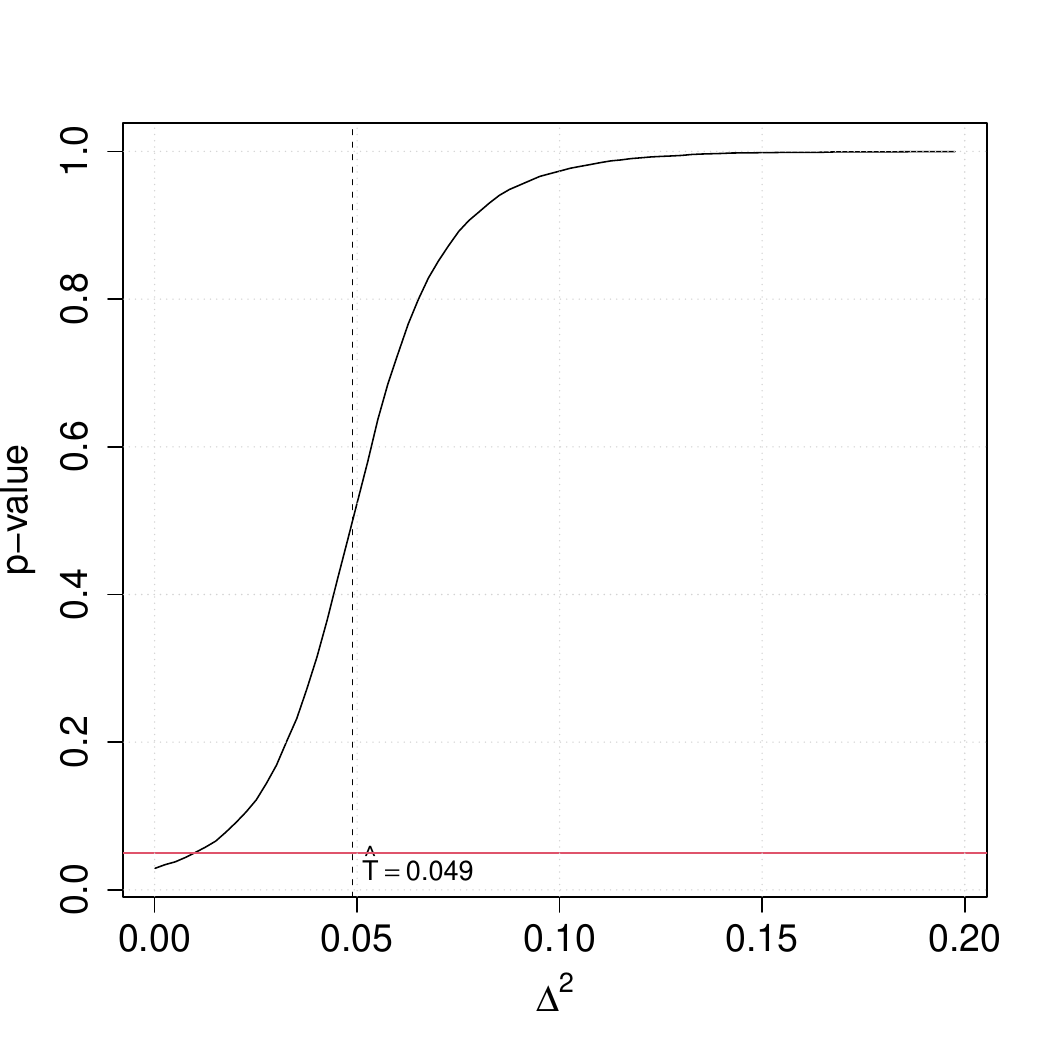}
\includegraphics[width=0.32\textwidth]{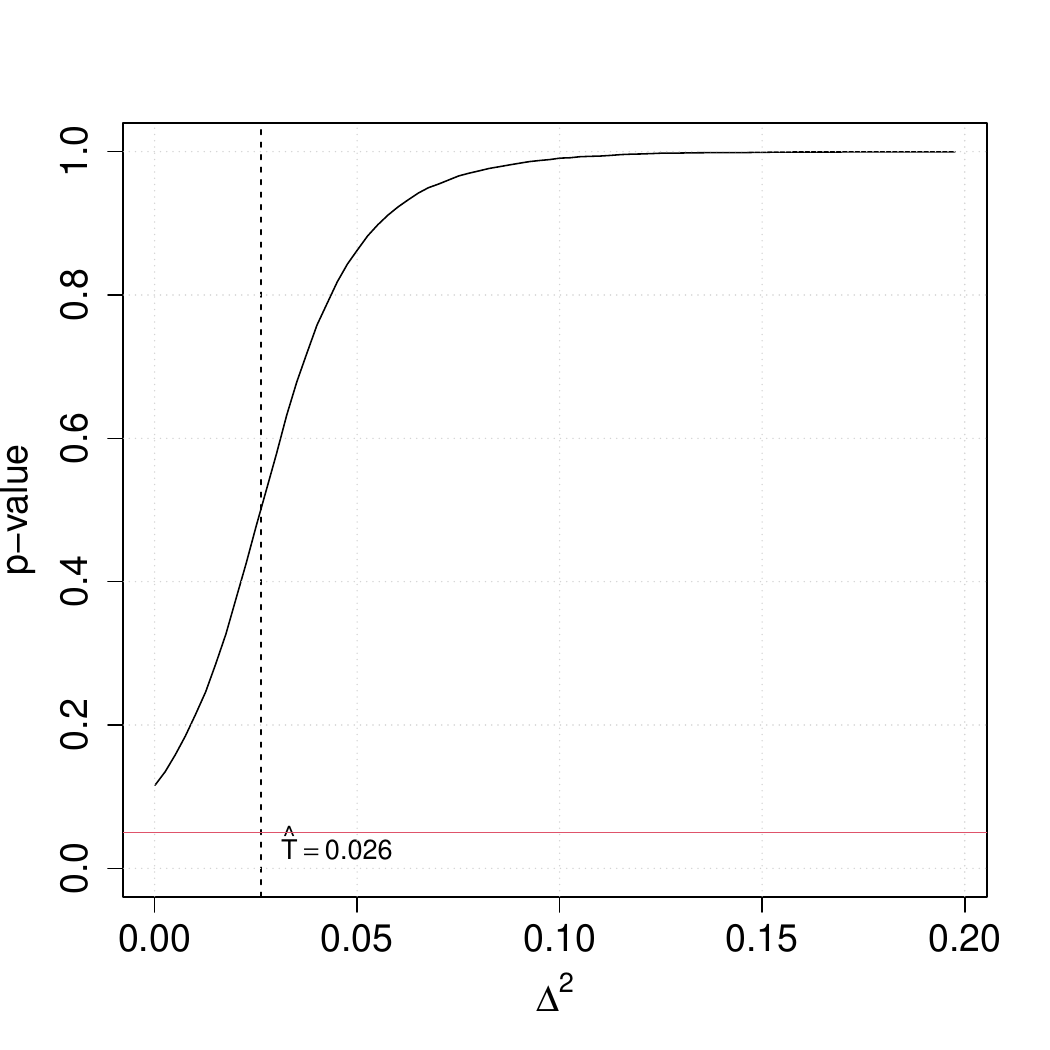}
\includegraphics[width=0.32\textwidth]{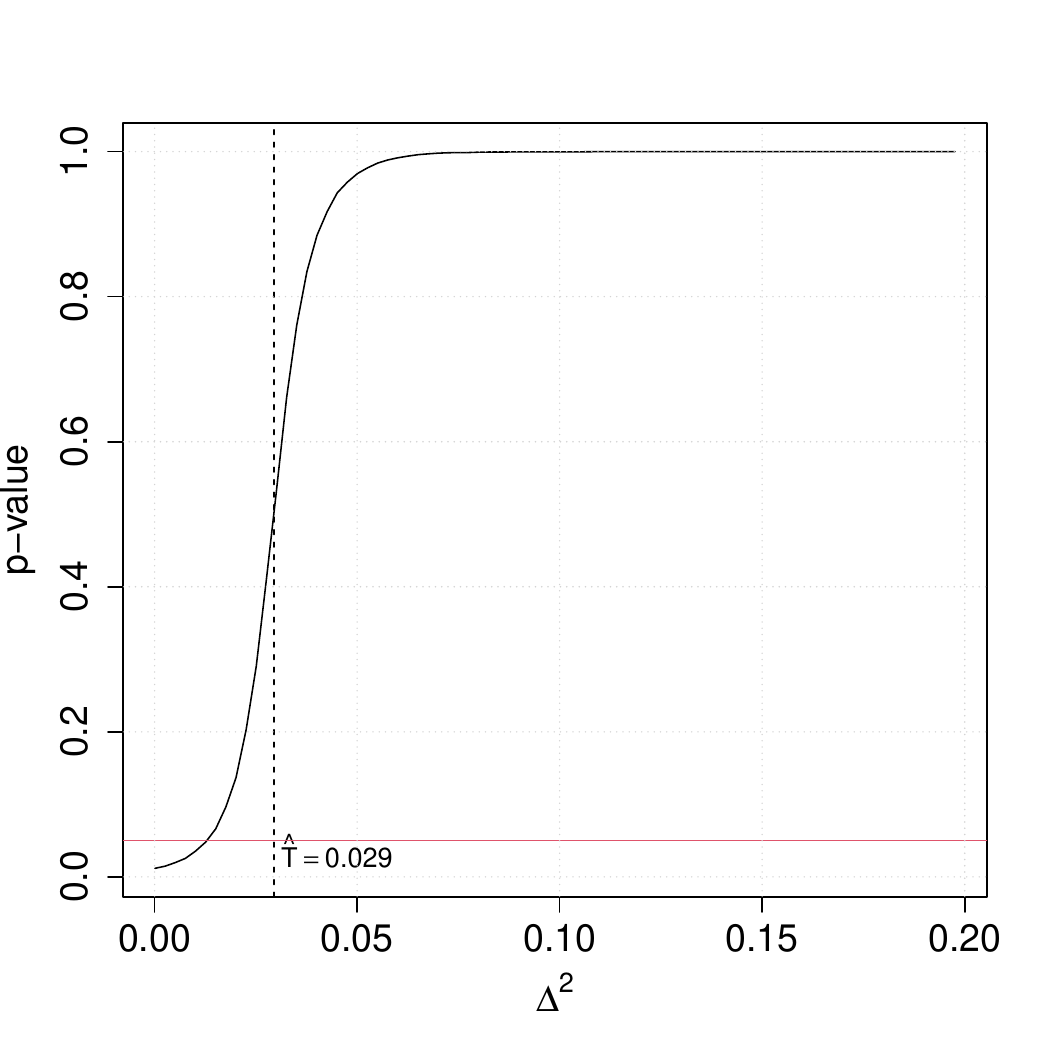}
\caption{Upper row: The estimated cumulative distribution functions (conditional distribution functions) for the 2013 and 2020 data. Lower row: $p$-values of the test \eqref{dec_rule} in dependence of the threshold $\Delta^2$. The red solid line indicates the significance level of $\alpha=0.05$, the black dashed line indicates the value of the test statistic.}
\label{fig:application10}
\end{center}
\end{figure}


\bigskip

{\bf Acknowledgements} This work was supported by TRR 391 \textit{Spatio-temporal Statistics for the Transition of Energy and Transport} (Project number 520388526) funded by the Deutsche Forschungsgemeinschaft (DFG, German Research Foundation).

\bibliographystyle{ecta}
\bibliography{sample}

\begin{thebibliography}{30}
\newcommand{\enquote}[1]{``#1''}
\expandafter\ifx\csname natexlab\endcsname\relax\def\natexlab#1{#1}\fi

\bibitem[\protect\citeauthoryear{Andrews}{Andrews}{1997}]{andrews:1997}
\textsc{Andrews, D.} (1997): \enquote{A Conditional Kolmogorov Test,} \emph{Econometrica}, 65(5), 1097--1128.

\bibitem[\protect\citeauthoryear{Berger and Delampady}{Berger and Delampady}{1987}]{berger1987}
\textsc{Berger, J.~O. and M.~Delampady} (1987): \enquote{{Testing Precise Hypotheses},} \emph{Statistical Science}, 2, 317 -- 335.

\bibitem[\protect\citeauthoryear{Bradley}{Bradley}{2007}]{Bradley.2007}
\textsc{Bradley, R.~C.} (2007): \emph{Introduction to Strong Mixing Conditions. Vol 1-3}, Heber City, Utah: Kendrick.

\bibitem[\protect\citeauthoryear{Brise{\~n}o-Sanchez, Hohberg, Groll, and Kneib}{Brise{\~n}o-Sanchez et~al.}{2020}]{sanchez:2020}
\textsc{Brise{\~n}o-Sanchez, G., M.~Hohberg, A.~Groll, and T.~Kneib} (2020): \enquote{Flexible Instrumental Variable Distributional Regression,} \emph{Journal of the Royal Statistical Society, Series A}, 183(4), 1553--1574.

\bibitem[\protect\citeauthoryear{Chernozhukov, Fernández-Val, and Luo}{Chernozhukov et~al.}{2025+}]{chernozhukov:2025}
\textsc{Chernozhukov, V., I.~Fernández-Val, and S.~Luo} (2025+): \enquote{Distribution Regression with Sample Selection and UK Wage Decomposition,} \emph{Journal of Political Economy}, forthcoming.

\bibitem[\protect\citeauthoryear{Chernozhukov, Fernández-Val, and Melly}{Chernozhukov et~al.}{2013}]{chernozhukov:2013}
\textsc{Chernozhukov, V., I.~Fernández-Val, and B.~Melly} (2013): \enquote{Inference on Counterfactual Distributions,} \emph{Econometrica}, 81(6), 2205--2268.

\bibitem[\protect\citeauthoryear{Chernozhukov, Fernández-Val, Newey, Stouli, and Vella}{Chernozhukov et~al.}{2020}]{chernozhukov:2020}
\textsc{Chernozhukov, V., I.~Fernández-Val, W.~Newey, S.~Stouli, and F.~Vella} (2020): \enquote{Semiparametric Estimation of Structural Functions in Nonseparable Triangular Models,} \emph{Quantitative Economics}, 11(2), 503--533.

\bibitem[\protect\citeauthoryear{Delgado, García-Suaza, and Sant'Anna}{Delgado et~al.}{2022}]{delgado:2022}
\textsc{Delgado, M., A.~García-Suaza, and P.~Sant'Anna} (2022): \enquote{Distribution Regression in Duration Analysis: An Application to Unemployment Spells,} \emph{Econometrics Journal}, 25(3), 675--698.

\bibitem[\protect\citeauthoryear{{Destatis}}{{Destatis}}{2025}]{destatis:2025}
\textsc{{Destatis}} (2025): \enquote{Verbraucherpreisindex und Inflationsrate,} \emph{https://www.destatis.de/DE/Themen/Wirtschaft/Preise/Verbraucherpreisindex/ \_inhalt.html (07.04.2025)}.

\bibitem[\protect\citeauthoryear{Dette and Schumann}{Dette and Schumann}{2024}]{detteschumann:2024}
\textsc{Dette, H. and M.~Schumann} (2024): \enquote{Testing for Equivalence of Pre-Trends in Difference-in-Differences Estimation,} \emph{Journal of Business and Economic Statistics}, 42(4), 1289--1301.

\bibitem[\protect\citeauthoryear{Foresi and Peracchi}{Foresi and Peracchi}{1995}]{foresiperacchi:1995}
\textsc{Foresi, S. and F.~Peracchi} (1995): \enquote{The Conditional Distribution of Excess Returns: An Empirical Analysis,} \emph{Journal of the American Statistical Association}, 90, 451--466.

\bibitem[\protect\citeauthoryear{Gordon}{Gordon}{1988}]{gordon:1988}
\textsc{Gordon, R.} (1988): \enquote{The Role of Wages in the Inflation Process,} \emph{American Econonomic Review}, 78(2), 276--283.

\bibitem[\protect\citeauthoryear{H{\"o}rmann and Kokoszka}{H{\"o}rmann and Kokoszka}{2010}]{HrmannKok}
\textsc{H{\"o}rmann, S. and P.~Kokoszka} (2010): \enquote{{Weakly dependent functional data},} \emph{The Annals of Statistics}, 38, 1845 -- 1884.

\bibitem[\protect\citeauthoryear{Hu and Lei}{Hu and Lei}{2024}]{hulei:2024}
\textsc{Hu, X. and J.~Lei} (2024): \enquote{A Two-Sample Conditional Distribution Test Using Conformal Prediction and Weighted Rank Sum,} \emph{Journal of the American Statistical Association}, 119(546), 1136--1154.

\bibitem[\protect\citeauthoryear{Jordà and Nechio}{Jordà and Nechio}{2023}]{jorda:2023}
\textsc{Jordà, O. and F.~Nechio} (2023): \enquote{Inflation and Wage Growth since the Pandemic,} \emph{European Econonomic Review}, 156, 104474.

\bibitem[\protect\citeauthoryear{Kneib, Silbersdorff, and Säfken}{Kneib et~al.}{2023}]{kneib:2023}
\textsc{Kneib, T., A.~Silbersdorff, and B.~Säfken} (2023): \enquote{Rage Against the Mean - A Review of Distributional Regression Approaches,} \emph{Econometrics and Statistics}, 26, 99--123.

\bibitem[\protect\citeauthoryear{Kutta and Dette}{Kutta and Dette}{2024}]{kuttadette}
\textsc{Kutta, T. and H.~Dette} (2024): \enquote{Validating approximate slope homogeneity in large panels,} \emph{Journal of Econometrics}, 246, 105898.

\bibitem[\protect\citeauthoryear{Rothe and Wied}{Rothe and Wied}{2013}]{RotheWied2013}
\textsc{Rothe, C. and D.~Wied} (2013): \enquote{Misspecification Testing in a Class of Conditional Distributional Models,} \emph{Journal of the American Statistical Association}, 108, 314--324.

\bibitem[\protect\citeauthoryear{Sheehy and Wellner}{Sheehy and Wellner}{1992}]{sheehy:1992}
\textsc{Sheehy, A. and J.~Wellner} (1992): \enquote{Uniform Donsker Classes of Functions,} \emph{Annals of Probability}, 20(4), 1983--2020.

\bibitem[\protect\citeauthoryear{{Sozialpolitik-aktuell}}{{Sozialpolitik-aktuell}}{2025}]{sozialpolitik:2025}
\textsc{{Sozialpolitik-aktuell}} (2025): \enquote{Sozialpolitik-aktuell,} \emph{https://www.sozialpolitik-aktuell.de/files/sozialpolitik-aktuell/\_Politikfelder/Einkommen-Armut/Datensammlung/PDF-Dateien/tabIII1.pdf (07.04.2025)}.

\bibitem[\protect\citeauthoryear{{Sozio-oekonomisches Panel (SOEP)}}{{Sozio-oekonomisches Panel (SOEP)}}{2022}]{soep:2022}
\textsc{{Sozio-oekonomisches Panel (SOEP)}} (2022): \enquote{Version 37, Daten der Jahre 1984-2020 (SOEP-Core v37, EU-Edition),} \emph{DOI: 10.5684/soep.core.v37eu}.

\bibitem[\protect\citeauthoryear{Spady and Stouli}{Spady and Stouli}{2025}]{spady:2025}
\textsc{Spady, R. and S.~Stouli} (2025): \enquote{Gaussian Transforms Modeling and the Estimation of Distributional Regression Functions,} \emph{Econometrica}, forthcoming.

\bibitem[\protect\citeauthoryear{Troster and Wied}{Troster and Wied}{2021}]{troster:2021}
\textsc{Troster, V. and D.~Wied} (2021): \enquote{A Specification Test of Dynamic Conditional Distributions,} \emph{Econometric Reviews}, 40(2), 109--127.

\bibitem[\protect\citeauthoryear{Tukey}{Tukey}{1991}]{tukey1991}
\textsc{Tukey, J.~W.} (1991): \enquote{The Philosophy of Multiple Comparisons,} \emph{Statistical Science}, 6, 100--116.

\bibitem[\protect\citeauthoryear{van~der Vaart and Wellner}{van~der Vaart and Wellner}{2023}]{VaartWellner2023}
\textsc{van~der Vaart, A. and J.~Wellner} (2023): \emph{Weak Convergence and Empirical Processes: With Applications to Statistics}, Springer Series in Statistics, Cham, Switzerland: Springer, 2nd ed.

\bibitem[\protect\citeauthoryear{Wang, Oka, and Zhu}{Wang et~al.}{2023}]{wang:2023}
\textsc{Wang, Y., T.~Oka, and D.~Zhu} (2023): \enquote{Bivariate Distribution Regression with Application to Insurance Data,} \emph{Insurance: Mathematics and Economics}, 113, 215--232.

\bibitem[\protect\citeauthoryear{Wellek}{Wellek}{2010}]{wellek2010testing}
\textsc{Wellek, S.} (2010): \emph{Testing Statistical Hypotheses of Equivalence and Noninferiority}, CRC Press.

\bibitem[\protect\citeauthoryear{Wied}{Wied}{2024}]{wied:2024}
\textsc{Wied, D.} (2024): \enquote{Semiparametric Distribution Regression with Instruments and Monotonicity,} \emph{Labour Economics}, 90, 102565.

\bibitem[\protect\citeauthoryear{Wu}{Wu}{2005}]{Wu2005}
\textsc{Wu, W.~B.} (2005): \enquote{Nonlinear system theory: Another look at dependence,} \emph{Proceedings of the National Academy of Sciences}, 102, 14150--14154.

\bibitem[\protect\citeauthoryear{Zhou, Zheng, and Zhang}{Zhou et~al.}{2017}]{zhou:2017}
\textsc{Zhou, W.-X., C.~Zheng, and Z.~Zhang} (2017): \enquote{Two-Sample Smooth Tests for the Equality of Distributions,} \emph{Bernoulli}, 23(2), 951--989.

\end{thebibliography}


\appendix
\section{Appendix}
   \def\theequation{A.\arabic{equation}}	
   \setcounter{equation}{0}

In this section we give  more details on the proof of Theorem \ref{thm4}. More precisely we derive in Theorem \ref{thm3} below the weak convergence in \eqref{det100}, which is the essential ingredient in the proof of Theorem \ref{thm4}.
Our first result provides a uniform Bahadur expansion for the estimators of the semiparametric distribution regression model.

\begin{theorem}\label{thm1}
Under Assumption \ref{ass:1}, it holds for $\ell=1,2$
\begin{equation}\label{theoremA1}
\sqrt{n_\ell}\big(\hat \beta ^{\ell}(t,y)- \beta ^{\ell}(y)\big)= -\tfrac{1}{\sqrt{n_\ell}} {C_\ell^{-1}  \over t}S^\ell_{\lfloor n_\ell t \rfloor}(\beta^\ell(y))+D_n(t,y)
\end{equation}
 where
$$S_k^\ell(\beta(y))=\sum_{i=1}^k R_i^\ell(\beta^\ell(y)),$$
\begin{equation*}
    R_i^\ell(\beta^\ell(y)) = X^\ell_i I_{\{Y_i^\ell\leq y\}}\frac{\lambda\big(X_i^{\ell^\top }\beta^\ell(y)\big)}{\Lambda\big(X_i^{\ell^\top }\beta^\ell(y)\big)} - X^\ell_i (1-I_{\{Y_i^\ell> y\}})\frac{\lambda\big(X_i^{\ell^\top }\beta^\ell(y)\big)}{\Lambda\big(X_i^{\ell^\top }\beta^\ell(y)\big)}, 
\end{equation*}
\begin{equation*}
C_\ell=\mathbb{E}\Big[\frac{\partial}{\partial \beta^\ell(y)} R_i^\ell\big(\beta^\ell(y)\big) \Big] \in \mathbb{R}^{p\times p}.
\end{equation*}
and the remainder satisfies 
$\sup_{(t,y)\in[\epsilon,1]\times I} |D_n(t,y)| \rightarrow 0$ for $n_\ell \rightarrow \infty$.
\end{theorem}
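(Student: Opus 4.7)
The plan is to derive the uniform Bahadur expansion by linearizing the defining Z-estimator equation $S^\ell_{\lfloor n_\ell t\rfloor}(\hat\beta^\ell(t,y)) = 0$. A first-order Taylor expansion of this score around the true parameter $\beta^\ell(y)$ yields
\[
0 = S^\ell_{\lfloor n_\ell t\rfloor}(\beta^\ell(y)) + J_{\lfloor n_\ell t\rfloor}(\tilde\beta_{t,y},y)\,\bigl(\hat\beta^\ell(t,y)-\beta^\ell(y)\bigr),
\]
where $J_{\lfloor n_\ell t\rfloor}(\beta,y) = \sum_{i=1}^{\lfloor n_\ell t\rfloor}\frac{\partial}{\partial\beta}R_i^\ell(\beta)$ is the empirical Hessian and $\tilde\beta_{t,y}$ lies on the segment between $\hat\beta^\ell(t,y)$ and $\beta^\ell(y)$. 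Normalizing by $\lfloor n_\ell t\rfloor$, inverting, and multiplying by $\sqrt{n_\ell}$ produces the claimed expansion: the factor $1/t$ in front of $C_\ell^{-1}$ appears precisely because $\lfloor n_\ell t\rfloor/n_\ell \to t$ uniformly on $[\epsilon,1]$, so the normalization shifts from $\lfloor n_\ell t\rfloor$ to $n_\ell$.

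To make this rigorous I would establish three ingredients uniformly over $(t,y)\in[\epsilon,1]\times I$. First, a preliminary uniform consistency $\sup_{(t,y)}\|\hat\beta^\ell(t,y)-\beta^\ell(y)\| = o_{\mathbb{P}}(1)$, obtained from a standard argmax/argzero argument combined with a uniform Glivenko--Cantelli statement for the score process in the triple index $(t,y,\beta)$. Second, uniform convergence of the empirical Hessian $\frac{1}{\lfloor n_\ell t\rfloor}J_{\lfloor n_\ell t\rfloor}(\beta,y)$ to the deterministic limit $C_\ell$ in a shrinking $\beta$-neighborhood of $\beta^\ell(y)$, uniformly in $(t,y)$. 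Third, an improvement to a uniform $\sqrt{n_\ell}$-rate of convergence for $\hat\beta^\ell(t,y)$, which ensures that the intermediate point $\tilde\beta_{t,y}$ is close enough to $\beta^\ell(y)$ for the higher-order Taylor remainder to be of order $o_{\mathbb{P}}(1/\sqrt{n_\ell})$ uniformly. The uniform invertibility required to isolate $C_\ell^{-1}$ is exactly Assumption \ref{ass:1}(4).

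The workhorse throughout is empirical process theory for the function classes
\[
\mathcal{F} = \Bigl\{(x,y')\mapsto R^\ell(x,y',\beta,y): y\in I,\ \|\beta-\beta^\ell(y)\|\le \delta\Bigr\}
\]
together with their derivatives in $\beta$. Smoothness of $\Lambda$ and $\lambda$, the uniform continuity and boundedness of the conditional density in Assumption \ref{ass:1}(3), and the moment condition $\mathbb{E}\|X^\ell\|^2<\infty$ from Assumption \ref{ass:1}(4) guarantee square-integrable envelopes and that these classes are Donsker (and in particular Glivenko--Cantelli). The sequential index $t$ can be absorbed by noting that the class $\{\mathbf{1}_{\{i/n_\ell\leq t\}}: t\in[\epsilon,1]\}$ is VC-indexed, and then appealing to a standard two-parameter maximal inequality (e.g.\ Ossiander-type bracketing, or Theorem 2.11.23 in van der Vaart and Wellner).

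The main obstacle will be the uniformity over the sequential index $t$: classical Bahadur expansions for Z-estimators are derived for a fixed sample, whereas here one must control the partial-sum process $k\mapsto S_k^\ell$ simultaneously with the functional index $y$, and pass uniformly from the implicit Taylor equation to the explicit formula \eqref{theoremA1}. The restriction $t\geq\epsilon>0$ is essential at two points: it keeps $1/t$ bounded (so the linearized inverse Hessian $C_\ell^{-1}/t$ does not blow up), and it keeps the effective sample size $\lfloor n_\ell t\rfloor$ of the same order as $n_\ell$, which is needed to transfer the Donsker-type bounds into uniform $o_{\mathbb{P}}(1)$ control of $D_n(t,y)$. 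Once these uniform empirical-process bounds are established, the remaining calculations are routine Taylor bookkeeping.
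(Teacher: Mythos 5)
Your proposal follows essentially the same route as the paper: both linearize the Z-estimator score via a Taylor expansion around $\beta^\ell(y)$, obtain the factor $C_\ell^{-1}/t$ from the convergence of the partial-sum Hessian $\tfrac{1}{n_\ell}\sum_{i=1}^{\lfloor n_\ell t\rfloor}\partial_\beta R_i^\ell$ to $tC_\ell$, and control the remainder through uniform consistency of $\hat\beta^\ell(t,y)$ together with the boundedness conditions in Assumption \ref{ass:1}. The paper simply compresses the empirical-process and consistency steps by referring to the proof of Theorem 5.2 in \citet{chernozhukov:2013}, whereas you spell them out; the content is the same.
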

\begin{proof}
Using a Taylor approximation as in the proof of Theorem 5.2 in \citet{chernozhukov:2013}, we obtain
\begin{equation*}
\sqrt{n_\ell}\big(\hat \beta ^{\ell}(t,y)- \beta ^{\ell}(y)\big)= -\tfrac{1}{\sqrt{n_\ell}} \left(\frac{1}{n_\ell}\sum_{i=1}^{[n_\ell t]} \frac{\partial}{\partial \beta^\ell(y)} R_i^\ell\big(\beta^\ell(y)\big)\right)^{-1}S^\ell_{\lfloor n_\ell t \rfloor}(\beta^\ell(y))+D_n(t,y)
\end{equation*}
with
\begin{equation*}
D_n(t,y) = - \frac{1}{2} \left(\frac{1}{n_\ell}\sum_{i=1}^{[n_\ell t]} \frac{\partial}{\partial \beta^\ell(y)} R_i^\ell\big(\beta^\ell(y)\big)\right)^{-1} E_n(t,y)\left(\sqrt{n_\ell}\big(\hat \beta ^{\ell}(t,y)- \beta ^{\ell}(y)\big),\hat \beta ^{\ell}(t,y)- \beta ^{\ell}(y)\big)\right),
\end{equation*}
where $E_n(t,y)$ is a bilinear form, which contains the appropriate entries of the matrix 
\begin{equation*}
\frac{1}{n_\ell}\sum_{i=1}^{[n_\ell t]} \frac{\partial}{\partial \beta^\ell(y) \partial \beta^\ell(y)^\top } R_i^\ell\big(\tilde \beta^\ell(y)\big).
\end{equation*}
Here, $\tilde \beta^\ell(t,y)$ is some intermediate point  between $\hat \beta^\ell(t,y)$ and $\beta^\ell(y)$. By similar arguments as given in the proof of Theorem 5.2 in \citet{chernozhukov:2013}, it follows that
$\sup_{(t,y) \in [\epsilon,1] \times I} |\hat \beta^\ell(t,y) - \beta^\ell(y)| \rightarrow_p 0$ for $n \rightarrow \infty$. Therefore, the statement of Theorem \ref{thm1} follows from the boundedness conditions in Assumption 1.
    \end{proof}
We continue noting  that 
\begin{align}
\label{det101}
\left\{\tfrac{1}{\sqrt{n_\ell}}S^\ell_{\lfloor n_\ell t \rfloor}(\beta^\ell(y))\right\}_{(t,y)\in Q}
\end{align} 
is a $p$-dimensional process on $Q=[\epsilon,1]\times I$, and  define 
\begin{equation*}
    \ell^\infty(Q,\mathbb{R}^p):=\Big \{f:Q\rightarrow \mathbb{R}^p \vert\ ||f||_\infty=||(f_1,\ldots,f_p)^\top ||_\infty :=\max_{i=1}^p||f_i||_\infty <\infty\Big \}
\end{equation*}
 as the set of all $p$-dimensional functions defined on $Q$ with bounded components.
 Our next result establishes the weak convergence of the process defined by  \eqref{det101} in $ \ell^\infty(Q,\mathbb{R}^p)$.

\begin{theorem}\label{thm2}
Under Assumption \ref{ass:1}, it holds  for $\ell=1,2$
\begin{equation*}
\left\{\tfrac{1}{\sqrt{n_\ell}}S^\ell_{\lfloor n_\ell t \rfloor}(\beta^\ell(y))\right\}_{(t,y)\in Q} \leadsto \left\{\mathbb{G}^\ell(t,y)\right\}_{(t,y)\in Q}
\end{equation*}
in $\ell^\infty(Q,\mathbb{R}^p)$, where $\mathbb{G}^\ell$ is a centered Gaussian process with covariance structure
\begin{equation*}
Cov\big(\mathbb{G}^\ell(t_1,y_1),\mathbb{G}^\ell(t_2,y_2)\big)=(t_1 \land t_2)\ C^\ell(y_1,y_2)
\end{equation*}
with
\begin{eqnarray*}
C^\ell(y_1,y_2) = \mathbb{E}\bigg[X_i^\ell X_i^{\ell^\top }  \frac{\lambda\big(X_i^{\ell^\top }\beta^\ell(y_1)\big)\lambda\big(X_i^{\ell^\top }\beta^\ell(y_2)\big)\Lambda\big(X_i^{\ell^\top }\beta^\ell(y_1)\big)\Lambda\big(X_i^{\ell^\top }\beta^\ell(y_2)\big)}{\big(1-\Lambda\big(X_i^{\ell^\top }\beta^\ell(y_1)\big)\big)\big(1-\Lambda\big(X_i^{\ell^\top }\beta^\ell(y_2)\big)\big)}\cdot \\
\Big\{\Lambda\big(X_i^{\ell^\top }\beta^\ell(y_1 \land y_2)\big)-\Lambda\big(X_i^{\ell^\top }\beta^\ell(y_1)\big)\Lambda\big(X_i^{\ell^\top }\beta^\ell(y_2)\big)\Big\}
\bigg].
\end{eqnarray*}
\end{theorem}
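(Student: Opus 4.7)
The plan is to view $\{n_\ell^{-1/2} S^\ell_{\lfloor n_\ell t\rfloor}(\beta^\ell(y))\}_{(t,y)\in Q}$ as a sequential (Kiefer--M\"uller) empirical process indexed by the $\mathbb{R}^p$-valued class
$\mathcal{F}^\ell := \{r^\ell_y : y\in I\}$, where $r^\ell_y(X,Y)$ is the expression defining $R^\ell_i(\beta^\ell(y))$ evaluated at $(X_i^\ell,Y_i^\ell)=(X,Y)$. The theorem will then follow from the combination of (i) the $P$-Donsker property of $\mathcal{F}^\ell$ and (ii) a standard sequential weak-convergence theorem for partial sums indexed by a Donsker class, applied coordinatewise in $\mathbb{R}^p$.

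I would proceed in four steps. First, each summand has mean zero: conditioning on $X^\ell$ and using $\mathbb{E}[I_{\{Y^\ell\leq y\}}\mid X^\ell]=\Lambda(X^{\ell\top}\beta^\ell(y))$, the two pieces of $R^\ell_i(\beta^\ell(y))$ cancel in expectation. Second, verify that $\mathcal{F}^\ell$ is $P$-Donsker. The class $\{I_{\{Y\leq y\}}:y\in I\}$ is a VC class of index $2$; on the compact $I$, the weight maps $y\mapsto \lambda(X^\top\beta^\ell(y))/\Lambda(X^\top\beta^\ell(y))$ and $y\mapsto \lambda(X^\top\beta^\ell(y))/(1-\Lambda(X^\top\beta^\ell(y)))$ are bounded and continuous in $y$ (for almost every $X$) under Assumption \ref{ass:1}(3)--(4) and the smoothness of $\Lambda$; a square-integrable envelope is furnished by $C\|X^\ell\|$ in view of $\mathbb{E}\|X^\ell\|^2<\infty$. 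These inputs yield a finite bracketing integral for $\mathcal{F}^\ell$, hence the Donsker property, via the usual permanence under multiplication of a VC class by a uniformly $L^2$-Lipschitz, continuously parametrized factor. Third, apply a Kiefer--M\"uller-type theorem (e.g.\ Theorem 2.12.1 of van der Vaart and Wellner, 1996): for any $P$-Donsker class with square-integrable envelope, the process $(t,f)\mapsto n^{-1/2}\sum_{i=1}^{\lfloor nt\rfloor}(f-Pf)$ converges weakly in $\ell^\infty([0,1]\times\mathcal{F})$ to a tight centered Gaussian process with covariance $(s\wedge t)\,\mathrm{Cov}(f,g)$; restricting to $Q=[\epsilon,1]\times I$ and reading it as an $\mathbb{R}^p$-valued process yields the asserted weak convergence. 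Fourth, identify the limit covariance: the partial-sum structure gives $(t_1\wedge t_2)\,\mathrm{Cov}(r^\ell_{y_1},r^\ell_{y_2})$ at once, and a direct conditional computation using $\mathbb{E}[I_{\{Y\leq y_1\}}I_{\{Y\leq y_2\}}\mid X^\ell]=\Lambda(X^{\ell\top}\beta^\ell(y_1\wedge y_2))$ reduces $\mathrm{Cov}(r^\ell_{y_1},r^\ell_{y_2})$ to the matrix $C^\ell(y_1,y_2)$ stated in the theorem after collecting terms.

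The main obstacle will be Step 2, namely establishing the Donsker property, equivalently, the asymptotic equicontinuity, for the $\mathbb{R}^p$-valued class $\mathcal{F}^\ell$ with the random envelope induced by $X^\ell$. The delicate point is that the weights $\lambda/\Lambda$ and $\lambda/(1-\Lambda)$ may blow up when $\Lambda(X^\top\beta^\ell(y))$ is near $0$ or $1$, so the envelope is not simply $\|X^\ell\|$ times a deterministic constant; here one must exploit the structural regularity implicit in Assumption \ref{ass:1}(4) (the minimum-eigenvalue bound rules out degenerate link evaluations in an integrated sense) together with uniform continuity of $\beta^\ell(\cdot)$ on the compact $I$ to bound the bracketing integral uniformly in $y$. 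Once a square-integrable envelope and $L^2(P)$-continuity of $y\mapsto r^\ell_y$ are in hand, the remainder is structural: tightness in the $t$-direction is routine for i.i.d.\ partial sums, and joint tightness in $(t,y)$ is then automatic through the Kiefer--M\"uller construction.
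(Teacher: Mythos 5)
Your proposal follows essentially the same route as the paper's proof: establish the Donsker property of the score class indexed by $y$ and then upgrade to the sequential (Kiefer--M\"uller) process, which the paper does by citing Theorem 5.2 of \citet{chernozhukov:2013} for the fixed-$t$ functional CLT with covariance kernel $C^\ell$ and then invoking the equivalence of conditions (A) and (F) in Theorem 1.1 of \citet{sheehy:1992} for the sequential extension, whereas you verify the Donsker property directly (VC class times Lipschitz weights, bracketing integral) and use a Kiefer--M\"uller theorem from van der Vaart and Wellner. The logical skeleton and the covariance identification are identical; your version is simply more self-contained, and your attention to the envelope when $\Lambda(X^{\top}\beta^\ell(y))$ nears $0$ or $1$ makes explicit a point the paper delegates entirely to the cited reference.
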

\begin{proof} 
By Theorem 5.2 in \citet{chernozhukov:2013}, for $\ell =1,2$, the process 
$$
\left\{\tfrac{1}{\sqrt{n_\ell}}S^\ell_{\lfloor n_\ell \rfloor}(\beta^\ell(y))\right\}_{y\in I}
$$
fulfills a functional central limit theorem, where the covariance kernel of the limiting (centered) Gaussian process in $\ell^\infty (I) $ is given by $C^\ell(y_1,y_2)$. The statement of Theorem \ref{thm2} then follows with the equivalence of the conditions (A) and (F) in Theorem 1.1 in \citet{sheehy:1992}.
\end{proof}

A direct consequence of Theorem \ref{thm1} and \ref{thm2} is the following corollary.

\begin{kor}
Under Assumption \ref{ass:1}, it holds for $\ell=1,2$
\begin{eqnarray*}
\left\{\sqrt{n_\ell}\big(\hat \beta ^{\ell}(t,y)- \beta ^{\ell}(y)\big)\right\}_{(t,y)\in Q} \leadsto \left\{{1 \over t} \mathbb{G}^\ell(t,y)\right\}_{(t,y)\in Q}
\end{eqnarray*}
in $\ell^\infty(Q,\mathbb{R}^p)$, where the $p$-dimensional Gaussian process $\mathbb{G}^\ell$ is defined in Theorem \ref{thm2}.
\end{kor}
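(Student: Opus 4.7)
The plan is to combine the uniform Bahadur expansion from Theorem~\ref{thm1} with the functional central limit theorem from Theorem~\ref{thm2} via a continuous mapping argument. By Theorem~\ref{thm1}, I can write
\[
\sqrt{n_\ell}\big(\hat \beta ^{\ell}(t,y)- \beta ^{\ell}(y)\big)= -\frac{C_\ell^{-1}}{t}\cdot\frac{1}{\sqrt{n_\ell}}S^\ell_{\lfloor n_\ell t \rfloor}(\beta^\ell(y))+D_n(t,y),
\]
where $\sup_{(t,y)\in Q}|D_n(t,y)|\to 0$ in probability. Thus the process on the left-hand side differs from the main term only by an asymptotically negligible remainder, uniformly on $Q=[\epsilon,1]\times I$.

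Next, I would verify that the map
\[
\Phi:\ell^\infty(Q,\mathbb{R}^p)\to\ell^\infty(Q,\mathbb{R}^p),\qquad (\Phi f)(t,y) := -\frac{C_\ell^{-1}}{t}\,f(t,y),
\]
is continuous. Because $t$ is bounded away from zero on $Q$ (thanks to the $\epsilon$-truncation) and $C_\ell$ is invertible (its inverse exists by Assumption~\ref{ass:1}(4), which gives a uniform lower bound on the smallest eigenvalue of the analogous matrix and hence on $C_\ell$), the linear operator $f\mapsto -C_\ell^{-1}f(t,y)/t$ is bounded with operator norm at most $\|C_\ell^{-1}\|/\epsilon$. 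Continuity in the sup-norm follows immediately.

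With this in hand, Theorem~\ref{thm2} together with the continuous mapping theorem yields
\[
\left\{-\frac{C_\ell^{-1}}{t}\cdot\frac{1}{\sqrt{n_\ell}}S^\ell_{\lfloor n_\ell t \rfloor}(\beta^\ell(y))\right\}_{(t,y)\in Q}\leadsto \left\{\frac{1}{t}\,\mathbb{G}^\ell(t,y)\right\}_{(t,y)\in Q}
\]
in $\ell^\infty(Q,\mathbb{R}^p)$, where (with the sign/matrix factor absorbed into the notation of the limit Gaussian process as in the statement) the limiting covariance structure is inherited from Theorem~\ref{thm2}. Slutsky's theorem in $\ell^\infty(Q,\mathbb{R}^p)$ then lets me add the uniformly negligible $D_n$ back in without affecting the weak limit, giving the claimed convergence.

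Because Theorems~\ref{thm1} and \ref{thm2} do all of the heavy lifting, there is no real obstacle: the only care needed is in checking the continuity of the normalizing map on $\ell^\infty(Q,\mathbb{R}^p)$ (which is where the role of $\epsilon>0$ becomes essential, since otherwise $1/t$ would be unbounded on $Q$) and in invoking Slutsky's theorem in the Banach space setting to absorb the uniform remainder $D_n$. Everything else is a direct consequence of the two preceding theorems.
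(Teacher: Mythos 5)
Your proof is correct and follows exactly the route the paper intends: the corollary is stated there as a direct consequence of Theorems \ref{thm1} and \ref{thm2} with no written proof, and your argument (uniform Bahadur expansion, continuity of the map $f\mapsto -C_\ell^{-1}f(t,y)/t$ on $\ell^\infty(Q,\mathbb{R}^p)$ thanks to $t\geq\epsilon$, continuous mapping, then Slutsky to absorb the uniformly negligible remainder $D_n$) supplies precisely the omitted details. The only caveat is the $-C_\ell^{-1}$ factor, which you correctly observe must be absorbed into the notation of the limit process for the statement to read literally as $\mathbb{G}^\ell(t,y)/t$; this is a notational looseness of the paper rather than a gap in your argument.
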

We now consider the mapping 
\begin{eqnarray}
\label{det1}
    \Phi: \begin{cases}
      \ell^\infty(Q,\mathbb{R}^p)\times\ell^\infty(Q,\mathbb{R}^p) &\rightarrow\ \ell^\infty(Q) \\
      \big(\gamma^1,\gamma^2\big) &\rightarrow\ \Phi\big(\gamma^1,\gamma^2\big)
    \end{cases}~,
\end{eqnarray}
where the function $\Phi\big(\gamma^1,\gamma^2\big): Q \to \mathbb{R}$ is defined by 
\begin{equation}
\label{det2?}
    \Phi\big(\gamma^1,\gamma^2\big)(t,y)=\Lambda\big(x^\top \gamma^1(t,y)\big) - \Lambda\big(x^\top \gamma^2(t,y)\big).
\end{equation}
In the following we interpret the function $\beta^\ell: y\mapsto\beta^\ell(y)$ as a function $(y,t)\mapsto \beta^\ell(y)$ defined on the domain $Q$, which is constant with respect to the second argument $t$. This  gives the representation 
\begin{eqnarray*}
\Phi\big(\beta^1,\beta^2\big)(t,y) &=& \Lambda\big(x^\top \beta^1(y)\big) - \Lambda\big(x^\top \beta^2(y)\big) \\
&=& F^1_{Y|X}(y|x) - F^2_{Y|X}(y|x) \\
&=:& \Delta(y|x).
\end{eqnarray*}
Interpreting the estimators $\hat \beta^\ell$ in the same way, we have 
\begin{eqnarray*}
\Phi\big(\hat\beta^1,\hat\beta^2\big)(t,y) &=& \Lambda\big(x^\top \hat\beta^1(y)\big) - \Lambda\big(x^\top \hat\beta^2(y)\big) \\
&=& \hat F^1_{Y|X}(t,y|x) - \hat F^2_{Y|X}(t,y|x) \\
&=:& \hat\Delta(t,y|x),
\end{eqnarray*}
and by the functional Delta method (\citealp{VaartWellner2023}) we obtain the following result.

\begin{theorem}\label{thm3}
    Under Assumption \ref{ass:1} and $n=n_1+n_2$, $\tfrac{n_1}{n_1+n_2}\rightarrow c$, it holds
    \begin{equation*}
   \left\{\sqrt{n}\big(\hat\Delta(t,y|x)- \Delta(y|x)\big)\right\}_{(t,y)\in Q} \leadsto \left\{\mathbb{H}(t,y)\right\}_{(t,y)\in Q} ~, 
\end{equation*}
where $\mathbb{H}$ is a centered real valued Gaussian process with covariance structure
\begin{equation} \label{det3}
    Cov\big(\mathbb{H}(t_1,y_1),\mathbb{H}(t_2,y_2)\big)= {t_1 \land t_2  \over t_1t_2} \cdot H(y_1,y_2)
\end{equation}
and
\begin{equation}\label{eq:H}
  H(y_1,y_2) = {1 \over c} \lambda\big(x^\top \beta^1(y_1)\big )^2 x^\top C^1(y_1,y_2) x + {1 \over 1-c} \lambda\big(x^\top \beta^2(y_2)\big )^2 x^\top C^2(y_1,y_2) x.
\end{equation}
\end{theorem}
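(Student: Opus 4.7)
The plan is to derive the weak convergence stated in Theorem \ref{thm3} by combining the weak convergence of the sequential parameter estimator processes (provided by Theorems \ref{thm1} and \ref{thm2} together with the Corollary) with the functional Delta method applied to the map $\Phi$ defined in \eqref{det1}. The overall strategy is standard: the Corollary yields $\sqrt{n_\ell}$-rate weak convergence of $\hat\beta^\ell(\cdot,\cdot)-\beta^\ell(\cdot)$ in $\ell^\infty(Q,\mathbb{R}^p)$ with Gaussian limit $\mathbb{G}^\ell(t,y)/t$, after which Slutsky-type arguments and the delta method translate this into the claimed convergence for the difference of the conditional distribution functions.

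First I would rescale. Under the assumption $n_1/(n_1+n_2)\to c\in(0,1)$, we have $n/n_1 \to 1/c$ and $n/n_2 \to 1/(1-c)$, so the Corollary together with Slutsky's theorem implies
\begin{equation*}
\big\{\sqrt{n}\big(\hat\beta^1(t,y)-\beta^1(y)\big)\big\}_{(t,y)\in Q} \leadsto \big\{ \tfrac{1}{\sqrt{c}\,t}\,\mathbb{G}^1(t,y)\big\}_{(t,y)\in Q}
\end{equation*}
in $\ell^\infty(Q,\mathbb{R}^p)$, and analogously for $\ell=2$ with factor $1/\sqrt{1-c}$. By the independence of the two samples (Assumption \ref{ass:1}.1), the joint weak convergence of $\sqrt{n}(\hat\beta^1-\beta^1,\hat\beta^2-\beta^2)$ holds in $\ell^\infty(Q,\mathbb{R}^p)\times\ell^\infty(Q,\mathbb{R}^p)$ with independent Gaussian limits. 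Next I would apply the functional Delta method to $\Phi$: since $\Lambda$ is continuously differentiable with derivative $\lambda$ and the inner map is the linear form $\gamma\mapsto x^\top\gamma$, the Hadamard derivative of $\Phi$ at $(\beta^1,\beta^2)$, viewed as constant functions of $t$, evaluated in direction $(\gamma^1,\gamma^2)$, is
\begin{equation*}
\Phi'_{(\beta^1,\beta^2)}(\gamma^1,\gamma^2)(t,y)=\lambda\big(x^\top\beta^1(y)\big)\,x^\top\gamma^1(t,y)-\lambda\big(x^\top\beta^2(y)\big)\,x^\top\gamma^2(t,y).
\end{equation*}
This differentiability is tangential to the set of functions that are continuous in $y$ (which contains the Gaussian limits, by the continuity of $C^\ell$ and sample-path continuity of $\mathbb{G}^\ell$); the functional Delta method as stated in \citet{VaartWellner2023} then delivers weak convergence of $\sqrt{n}(\hat\Delta-\Delta)$ to the Gaussian process
\begin{equation*}
\mathbb{H}(t,y):=\tfrac{\lambda(x^\top\beta^1(y))}{\sqrt{c}\,t}\,x^\top\mathbb{G}^1(t,y)-\tfrac{\lambda(x^\top\beta^2(y))}{\sqrt{1-c}\,t}\,x^\top\mathbb{G}^2(t,y).
\end{equation*}

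Finally, the covariance structure follows by direct computation: the scalar factor $1/t$ outside each Gaussian component produces the multiplicative term $(t_1\wedge t_2)/(t_1 t_2)$, the independence of $\mathbb{G}^1$ and $\mathbb{G}^2$ eliminates cross terms, and the $\lambda(\cdot)$ and $x^\top C^\ell x$ factors recombine exactly into the function $H(y_1,y_2)$ of \eqref{eq:H}. I expect the main obstacle to be the verification of the Hadamard differentiability of $\Phi$ uniformly over $(t,y)\in Q$ in the appropriate product space, which requires a uniform mean-value argument exploiting the compactness of $Q$, the boundedness of $\lambda$ and $\lambda'$ (inherited from the smoothness of $\Lambda$ in Assumption \ref{ass:1}), and the uniform consistency $\sup_{(t,y)\in Q}\|\hat\beta^\ell(t,y)-\beta^\ell(y)\|\to_p 0$ that was already used in the proof of Theorem \ref{thm1}; once this uniformity is in place, the remaining steps reduce to bookkeeping with the covariance kernel from Theorem \ref{thm2}.
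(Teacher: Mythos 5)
Your proposal follows essentially the same route as the paper: pass from the Corollary's $\sqrt{n_\ell}$-scaled convergence of $\hat\beta^\ell(t,y)-\beta^\ell(y)$ to joint $\sqrt{n}$-scaled convergence using $n/n_1\to 1/c$ and sample independence, apply the functional Delta method to $\Phi$ with the Hadamard derivative $\Phi'_{(\beta^1,\beta^2)}(\gamma^1,\gamma^2)(t,y)=\lambda(x^\top\beta^1(y))x^\top\gamma^1(t,y)-\lambda(x^\top\beta^2(y))x^\top\gamma^2(t,y)$, and read off the covariance from the explicit form of $\mathbb{H}$. Your additional remarks on the rescaling step and on tangential differentiability only make explicit what the paper leaves implicit; the argument is correct.
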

\begin{proof}
Note that  the mapping $\Phi$ defined 
 in \eqref{det1} is Hadamard differentiable at the point $(\beta^1,\beta^2) \in  \ell^\infty(Q,\mathbb{R}^p)\times\ell^\infty(Q,\mathbb{R}^p) $ with derivative
    \begin{eqnarray}
    \Phi^\prime _{(\beta^1,\beta^2) }: 
    \begin{cases}
      \ell^\infty(Q,\mathbb{R}^p)\times\ell^\infty(Q,\mathbb{R}^p) &\rightarrow\ \ell^\infty(Q) \\
      \big(\gamma^1,\gamma^2\big) &\rightarrow\ \Phi 
      ^\prime _{(\beta^1,\beta^2)}
      \big(\gamma^1,\gamma^2\big)
    \end{cases}
\end{eqnarray}
where the function $\Phi 
      ^\prime _{(\beta^1,\beta^2)}
      \big(\gamma^1,\gamma^2\big): Q \to \mathbb{R} $ is defined by 
\begin{equation}
\label{det2}
    \Phi ^\prime _{(\beta^1,\beta^2)}\big(\gamma^1,\gamma^2\big)(t,y)=\Lambda^\prime \big(x^\top \beta^1(y)\big )x^\top\gamma^1(t,y) - \Lambda^\prime \big(x^\top \beta^2(y) \big )x^\top\gamma^2(t,y).
\end{equation}
Then, by the functional delta method it follows that  
\begin{align*} &\left\{\sqrt{n}\big(\hat\Delta(t,y|x)- \Delta(y|x)\big)\right\}_{(t,y)\in Q} \\ & =   \sqrt{n} \left\{
     \Phi\big(\hat\beta^1,\hat\beta^2\big)(t,y) - 
    \Phi\big(\beta^1,\beta^2\big)(t,y) 
     \right\}_{(t,y)\in Q} \\
    &=   \Phi ^\prime _{(\beta^1,\beta^2)}\Big( \sqrt{n}\big \{ \hat \beta ^{1}(t,y)- \beta ^{1}(y),
   \hat \beta ^{2}(t,y)- \beta ^{2}(y) \big \} _{(t,y)\in Q}
       \Big )  + o_{\mathbb{P}}(1) \\
    &  \leadsto \left\{\mathbb{H}(t,y)\right\}_{(t,y)\in Q}:= 
     \Phi ^\prime _{(\beta^1,\beta^2)}\Big( \Big \{ {1 \over\sqrt{c} } {1 \over\  t} \mathbb{G}^1(t,y) , {1 \over\sqrt{1-c} }{1 \over  t} \mathbb{G}^2(t,y)
      \Big \} _{(t,y)\in Q}
       \Big )~,
\end{align*}
where the $p$-dimensional Gaussian processes $\mathbb{G}^1$  and  $\mathbb{G}^2$ are defined in Theorem \ref{thm2}. Observing the definition of the derivative in \eqref{det2}, we see that
\begin{equation*}
\mathbb{H}(t,y) = \Lambda^\prime \big(x^\top \beta^1(y)\big )x^\top {1 \over\sqrt{c} } {1 \over\  t} \mathbb{G}^1(t,y) - \Lambda^\prime \big(x^\top \beta^2(y) \big )x^\top{1 \over\sqrt{1-c} } {1 \over\  t} \mathbb{G}^2(t,y).
\end{equation*}
It is now straightforward to show that the covariance structure of the limiting process factorizes as stated in \eqref{det3}. 
\end{proof}
 
\end{document}